\numberwithin{equation}{section}
\newcounter{mnotecount}[section]
\theoremstyle{plain}
\newtheorem{theorem}{Theorem}
\newtheorem{proposition}[theorem]{Proposition}
\newtheorem{lemma}[theorem]{Lemma}
\newtheorem{remark}[theorem]{Remark}
\renewcommand{\d}{{\rm d}}
\renewcommand{\i}{{\rm i}}
\title{Generalized Siklos space-times}
\author[B. Araneda]{Bernardo Araneda}
\address{Max Planck Institut f\"ur Gravitationsphysik (Albert Einstein Institute), Am M\"uhlenberg 1, D-14476 Potsdam, Germany}
\email{bernardo.araneda@aei.mpg.de}
\author[\'A. J. Murcia]{\'Angel J. Murcia}
\address{INFN, Sezione di Padova, Via Francesco Marzolo 8, 35131 Padova, Repubblica Italiana}
\email{angel.murcia@pd.infn.it}
\begin{document}

\date{\today} 
\begin{abstract}
Motivated by supersymmetry methods in general relativity, we study four-dimensional Lorentzian space-times with a complex Dirac spinor field satisfying a Killing-spinor-like equation where the Killing constant is promoted to a complex function. We call the resulting geometry a generalized Siklos space-time. 
After deriving a number of identities for complex spaces, we specialize to Lorentz signature, where we show that the Killing function must be real and that the corresponding Dirac spinor is Majorana (as long as the space-time is not conformally flat), and we obtain the local form of the metric. 
We show that the purely gravitational degrees of freedom correspond to waves, whereas the matter sources generically correspond, via Einstein's field equations, to a sum of pure radiation and a space-like perfect fluid. Consequently, we conclude that the physically relevant case is obtained when the Killing function is homogeneous on the wave surfaces.
\end{abstract}

\maketitle

\section{Introduction}

In general relativity (GR), Einstein's field equations --- including possibly a non-vanishing stress-energy tensor --- constitute an intricate system of second-order non-linear partial differential equations on a Lorentzian manifold. Given their intrinsic complexity, the discovery of novel solutions with physically sensible matter content is of interest to both the GR and theoretical high-energy physics communities. As a consequence, the study of geometric methods which lead to the reduction or simplification of such equations is thus of paramount importance. One of these techniques has been widely employed in the supergravity literature and is that of \emph{supersymmetry}, where, broadly speaking, one imposes a set of first-order differential equations which generically imply the fulfilment of \emph{almost} all  equations of motion for gravity and the corresponding matter fields \cite{Ivanov}.

More precisely, a classical supergravity solution --- in the sense that all fermionic fields have been truncated, which is always consistent, cf. \cite[Ch. 22]{Freedman} ---  is called supersymmetric or BPS if it preserves some supersymmetry, in the sense that it is invariant under the fermionic transformation rules determined by the corresponding supergravity theory. At the classical level, these transformation rules take the form of a first order differential operator applied to a spinor field (cf. \cite[Eq. (22.56)]{Freedman}), and the solution is supersymmetric if the result of this operation is zero. The spinor field satisfying the corresponding differential equation is called a \emph{super-covariantly constant} spinor. The simplest cases correspond to a parallel spinor (e.g. \cite{Tod1983,Gibbons1986,Tod1995,Byrant2000,Murcia20,A22}) and a Killing spinor (e.g. \cite{Siklos1985,Gibbons1986,Friedrich}).

More general spinors have also been considered in the literature. Here we will be interested in Lorentzian four-manifolds with spinor fields satisfying the equation
\begin{align}
 \nabla_{X}\psi = \tfrac{\lambda}{\sqrt{2}} X\cdot\psi, \qquad \lambda\in C^{\infty}(M),
 \label{KSE0}
\end{align}
for all vector fields $X$, where $\cdot$ denotes Clifford product and $\lambda$ is a complex function. The case in which $\lambda$ is constant corresponds to the ordinary Killing spinor equation. We may refer to solutions of \eqref{KSE0} as {\em Killing spinors} with {\em Killing function} $\lambda$.\footnote{In \cite{Rademacher}, such spinor fields were called \emph{generalized Killing}, while in other works, see e.g. \cite{BarGauduchonMoroianu,FK1,FK2}, the notion of {\em generalized Killing spinors} is more general.} 
The case $\lambda={\rm const.}$ was studied by Siklos \cite{Siklos1985}, who called the solutions {\em Lobatchevski waves}. We will then say that a space-time with a solution to \eqref{KSE0} is a \emph{generalized Siklos space-time}. 
Equation \eqref{KSE0} --- for non-constant $\lambda$ --- has been studied in the Riemannian setup, see for example \cite{Rademacher}. Here we will focus on Lorentzian (and complex) space-times, which is a topic of modern research --- cf. \cite{Shahbazi24}, where real spinors parallel with respect to a metric connection with totally skew-symmetric torsion are studied in Lorentzian three-manifolds, which is equivalent to \eqref{KSE0} in three space-time dimensions. Apart from its mathematical interest, one may argue that \eqref{KSE0} represents a natural generalization of the usual supersymmetry condition in four-dimensional minimal AdS supergravity, and is part of the program of using first-order spinorial differential equations to find particular solutions of second-order differential equations (the Einstein equations with appropriate matter content). 

Our main results may be summarized as follows. After providing the necessary background material in Section \ref{sec:prelimi}, we obtain in Section \ref{sec:CKS} a number of identities for a complex space admitting a solution to \eqref{KSE0}, see Lemma \ref{lemma:CKS} and Remark \ref{remark:complex}. We also briefly show that in Euclidean signature, any 4-space with a solution to \eqref{KSE0} must be conformally flat (Prop. \ref{prop:riemannian}). 
In Section \ref{sec:KSL} we impose Lorentzian reality conditions. We show that if the space-time is not conformally flat, then a complex solution to \eqref{KSE0} reduces to a real one. We obtain the local form of the metric and curvature, and show that the purely gravitational degrees of freedom correspond to waves. We find an adapted coordinate system, and prove that the Killing function can only depend on the coordinate labeling the wave surfaces --- see \cite[Sec. 18.1]{Grif-Pod} for their definition in generic Kundt space-times --- and on one of the coordinates that are transverse to the propagation of the waves. These results are contained in Theorem \ref{result:lorentzian} and Prop. \ref{prop:CF}. 
In Section \ref{sec:Einstein} we give a physical interpretation, by analyzing the stress-energy tensor that sources the gravitational field via Einstein's field equations. We find that in the generic case, the matter content corresponds to null dust plus a space-like perfect fluid. The physically relevant case occurs when the Killing function is homogeneous on the wave surfaces.

We will use 2-spinor techniques and the {\em abstract-index} notation, cf. \cite{PR1, PR2}. See also \cite{HuggettTod} and \cite[Ch. 13]{Wald} for background on 2-spinors.

\section{Preliminaries}

\label{sec:prelimi}

In this section we give an elementary review of those aspects of the 2-spinor formalism that are needed for this work, including its connection with Dirac spinors. Our notation and conventions  follow \cite{PR1, PR2} (cf. in particular the appendix in \cite{PR2}). 

\subsection{Spinor algebra} 

Let $V$ be a $d$-dimensional {\em complex} vector space, and $g_{mn}$ a symmetric non-degenerate bilinear form in $V$. We identify $V\cong V^{*}$ via $g_{mn}$. Indices $a,b,...,m,n,...$ are {\em abstract}. The Clifford algebra is 
\begin{align}
\gamma_{m}\gamma_{n}+\gamma_{n}\gamma_{m}=-2g_{mn}\mathbb{I}. \label{Clifford}
\end{align}
The Dirac matrices $\gamma_{m}$ are abstract linear transformations $\mathbb{D}\to V^{*}\otimes\mathbb{D}$, where $\mathbb{D}$ is a complex vector space whose elements are called Dirac spinors, and $\mathbb{I}$ is the identity in $\mathbb{D}$. The dimension of $\mathbb{D}$ is $2^{\lfloor d \rfloor/2}$. If $d$ is even, $\mathbb{D}$ splits into two irreducible pieces as $\mathbb{D}=\mathbb{S}\oplus\mathbb{S}'$ (see \cite[eq. (B.16)]{PR2}), where $\mathbb{S}$ and $\mathbb{S}'$ are $2^{(\frac{d}{2}-1)}$-dimensional and are said to have opposite chirality. 

We are interested in $d=4$, so $V$ and $\mathbb{D}$ are 4-dimensional, and $\mathbb{S}$, $\mathbb{S'}$ are 2-dimensional. Elements of $\mathbb{S}$, $\mathbb{S}'$ are called Weyl spinors. In abstract indices, we denote generic elements of $\mathbb{S}$ and $\mathbb{S}'$ by $o_{A}$ and $\tilde{o}_{A'}$, respectively. A generic Dirac spinor $\psi\in\mathbb{D}$ is decomposed as
\begin{align}
\psi = \left( \begin{matrix} o_{A} \\ \tilde{o}_{A'} \end{matrix} \right). \label{DS}
\end{align}
The spaces $\mathbb{S}$ and $\mathbb{S'}$ correspond to the fundamental representations $(\frac{1}{2},0)$ and $(0,\frac{1}{2})$ of ${\rm SL}(2,\mathbb{C})$, and the Dirac representation is $(\frac{1}{2},0)\oplus(0,\frac{1}{2})$.

The spaces $\mathbb{S}$, $\mathbb{S}'$ are equipped with symplectic forms $\epsilon_{AB}=-\epsilon_{BA}$, $\epsilon_{A'B'}=-\epsilon_{B'A'}$, respectively. The inverses are denoted $\epsilon^{AB}$, $\epsilon^{A'B'}$. These objects are used to raise and lower indices as follows:
\begin{align}
 o^{A}=\epsilon^{AB}o_{B}, \qquad \iota_{A}=\epsilon_{BA}\iota^{B}, \qquad
 \tilde{o}^{A'} = \epsilon^{A'B'}\tilde{o}_{B'}, \qquad \tilde\iota_{A'}=\epsilon_{B'A'}\tilde\iota^{B'}. 
 \label{RSI}
\end{align}
Note that the ``square'' of any 2-spinor is zero: $o_Ao^A=\epsilon^{AB}o_Ao_B=0$, etc. 

An unprimed spin dyad is a pair of spinors $o_{A},\iota_{A}$ such that $\epsilon^{AB}o_{A}\iota_{B}=o_{A}\iota^{A}=1$, and a primed spin dyad is $\tilde{o}_{A'},\tilde\iota_{A'}$ with $\tilde{o}_{A'}\tilde\iota^{A'}=1$. It follows that $\epsilon_{AB}=o_{A}\iota_{B}-\iota_{A}o_{B}$ and $\epsilon_{A'B'}=\tilde{o}_{A'}\tilde\iota_{B'}-\tilde\iota_{A'}\tilde{o}_{B'}$. 

The Clebsch-Gordan decomposition gives $(\frac{1}{2},0)\otimes(0,\frac{1}{2})\cong(\frac{1}{2},\frac{1}{2})$. The latter is the vector representation, so $\mathbb{S}\otimes\mathbb{S}'$ and $V$ are isomorphic. The isomorphism is denoted $\sigma$. In abstract indices:
\begin{align}
 u_{AA'} \mapsto u_{a} = \sigma_{a}{}^{AA'}u_{AA'}, 
 \qquad
 v_{a} \mapsto v_{AA'} \equiv \sigma^{a}{}_{AA'}v_{a}.
 \label{isom}
\end{align}
The map $\sigma$ must be compatible with the action of the orthogonal group. One way to realize this is as follows: think of elements of $V\cong\mathbb{C}^4$ as column vectors, elements of $\mathbb{S}\otimes\mathbb{S}'\cong\mathbb{C}^2\otimes\mathbb{C}^2$ as $2\times2$ matrices, then take the canonical basis $e_0,...,e_3$ of $\mathbb{C}^4$, and set $\sigma_{0}=\sigma(e_{0})=\frac{1}{\sqrt{2}}\mathbb{I}_{2\times2}$ and $\sigma_{i}=\sigma(e_{i})=\frac{1}{\sqrt{2}}\sigma'_{i}$, where $\sigma'_{i}$ are the Pauli matrices ($i=1,2,3$). 

It follows that $g(v,v)=2\det\sigma(v)$ for any $v\in V$, which implies that the metric is equivalent to the object $g_{AA'BB'} \equiv \epsilon_{AB}\epsilon_{A'B'}$.
We also see that $v$ is null iff $\det\sigma(v)=0$. This means that $v$ is null iff it corresponds to a product of spinors, $\sigma(v)=o\otimes\tilde{o}$, or in abstract indices $v_{AA'}=o_A\tilde{o}_{A'}$. In particular, the spin dyads $\{o_{A},\iota_{A}\}$, $\{\tilde{o}_{A'},\tilde\iota_{A'}\}$ give origin to a null tetrad:
\begin{align}
\ell_{a}:=\sigma_{a}{}^{AA'}o_{A}\tilde{o}_{A'}, \quad n_{a}:=\sigma_{a}{}^{AA'}\iota_{A}\tilde{\iota}_{A'}, \quad m_{a}:=\sigma_{a}{}^{AA'}o_{A}\tilde{\iota}_{A'}, \quad 
\tilde{m}_{a}:=\sigma_{a}{}^{AA'}\iota_{A}\tilde{o}_{A'}. 
\label{nulltetrad}
\end{align}
All of these vectors are null, and they satisfy $\ell_a n^a=1$, $m_a\tilde{m}^a=-1$, with the rest of the contractions being zero.
In practice, it is customary (and will be done in this manuscript) to omit the symbol $\sigma$ in expressions like \eqref{isom} and \eqref{nulltetrad} (cf. \cite{PR1}), so that we have the abstract-index equivalences 
\begin{align}\label{indices}
 a \equiv AA', \qquad b\equiv BB', \qquad c\equiv CC', \qquad ... 
\end{align}

Following Penrose \& Rindler (cf. \cite[footnote on pp. 221]{PR1} and \cite[eq. (B.86)]{PR2}),
we will use the following representation of Dirac matrices and their action on Dirac spinors:
\begin{align}
 \gamma_{m} = \sqrt{2} 
 \left( \begin{matrix} 0 & \epsilon_{MA}\delta_{M'}^{B'} \\
 \epsilon_{M'A'}\delta_{M}^{B} & 0 \end{matrix} \right), 
 \qquad
 \gamma_{m}\psi = \sqrt{2}  
\left(   \begin{matrix}  \epsilon_{MA}\tilde{o}_{M'} \\
 \epsilon_{M'A'}o_{M}  \end{matrix}  \right).
 \label{Diracmatrices}
\end{align}

\begin{remark}\label{Remark:AltRep}
An alternative representation of Dirac matrices satisfying \eqref{Clifford} is 
\begin{align}
 \gamma'_{m} = \sqrt{2} 
 \left( \begin{matrix} 0 & \i\epsilon_{MA}\delta_{M'}^{B'} \\
 -\i\epsilon_{M'A'}\delta_{M}^{B} & 0 \end{matrix} \right), 
 \qquad 
 \gamma'_{m}\psi = \sqrt{2}  
 \left(   \begin{matrix}  \i\epsilon_{MA}\tilde{o}_{M'} \\
 -\i\epsilon_{M'A'}o_{M}  \end{matrix}  \right).
 \label{Diracmatrices2}
\end{align}
In this paper we will use \eqref{Diracmatrices}, but we note that the difference between \eqref{Diracmatrices} and \eqref{Diracmatrices2} is important when analyzing the reality properties of $\lambda$ in \eqref{KSE0}, see e.g. Remark \ref{remark:lambdareal}.
\end{remark}

\subsection{Spinor fields}
\label{sec:spinorfields}

Let $(M,g_{ab})$ be a complex 4-dimensional spin manifold, with spin structure $P_{\rm spin}$. The unprimed spinor bundle is defined as the associated vector bundle $\mathcal{S}=P_{\rm spin}\times_{(\frac{1}{2},0)}\mathbb{C}^{2}$, and similarly the primed spinor bundle is $\mathcal{S}'=P_{\rm spin}\times_{(0,\frac{1}{2})}\mathbb{C}^{2}$. The bundle of Dirac spinors is $\mathcal{D}=\mathcal{S}\oplus\mathcal{S}'$. These bundles generalize, respectively, the spin spaces $\mathbb{S}$, $\mathbb{S}'$ and $\mathbb{D}$. A spinor field is a smooth section of the corresponding spinor bundle. The relation $\mathbb{S}\otimes\mathbb{S}'\cong V$ becomes now $\mathcal{S}\otimes\mathcal{S}'\cong TM$. The Levi-Civita connection $\nabla_{a}$ lifts to a connection on the various spinor bundles, which we also denote by $\nabla_{a}$. Using the convention \eqref{indices}, we have $\nabla_{a}\equiv\nabla_{AA'}$. 

Our convention for the Riemann tensor is $[\nabla_{a},\nabla_{b}]v^{d}=R_{abc}{}^{d}v^{c}$ (cf. \cite[Eq. (4.2.30)]{PR1}). Lowering an index $R_{abcd}=g_{de}R_{abc}{}^{e}$, we have $R_{abcd}=R_{[ab][cd]}$. The additional symmetries $R_{abcd}=R_{cdab}$ and $R_{[abc]d}=0$ give the following spinor decomposition:
\begin{subequations}\label{decompRiemann}
\begin{align}
\nonumber R_{abcd} \equiv R_{AA'BB'CC'DD'} ={}& X_{ABCD}\epsilon_{A'B'}\epsilon_{C'D'}+\Phi_{ABC'D'}\epsilon_{A'B'}\epsilon_{CD} \\
& + \Phi_{A'B'CD}\epsilon_{AB}\epsilon_{C'D'}+\tilde{X}_{A'B'C'D'}\epsilon_{AB}\epsilon_{CD}, \label{Riemann} \\
X_{ABCD} ={}& \Psi_{ABCD} + \Lambda(\epsilon_{AC}\epsilon_{BD}+\epsilon_{AD}\epsilon_{BC}), \label{X} \\
\tilde{X}_{A'B'C'D'} ={}& \tilde{\Psi}_{A'B'C'D'} + \Lambda(\epsilon_{A'C'}\epsilon_{B'D'}+\epsilon_{A'D'}\epsilon_{B'C'}), \label{tildeX}
\end{align}
\end{subequations}
where $\Psi_{ABCD}=\Psi_{(ABCD)}$ and $\tilde{\Psi}_{A'B'C'D'}=\tilde{\Psi}_{(A'B'C'D')}$ correspond respectively to the anti-self-dual (ASD) and self-dual (SD) Weyl tensors (cf. \cite[eq. (4.6.42)-(4.6.43)]{PR1}), and we say that they represent the {\em purely gravitational} degrees of freedom. The quantities $\Phi_{ABC'D'}=\Phi_{(AB)(C'D')}$ and $\Lambda$ correspond to the trace-free Ricci tensor and Ricci scalar, respectively \cite[eq. (4.6.22)-(4.6.23)]{PR1}. We have:
\begin{equation}\label{curvaturespinors}
\begin{aligned}
& C^{-}_{abcd} = \Psi_{ABCD}\epsilon_{A'B'}\epsilon_{C'D'}, \qquad
C^{+}_{abcd} = \tilde\Psi_{A'B'C'D'}\epsilon_{AB}\epsilon_{CD}, \\
& \Phi_{ABA'B'} = -\tfrac{1}{2}(R_{ab} - \tfrac{R}{4}g_{ab}), \qquad
\Lambda = \tfrac{1}{24}R.
\end{aligned}
\end{equation}

The {\em Petrov classification} refers to the algebraic type of the Weyl curvature: at a generic point $x\in M$, there exist four spinors $\alpha_{A},\beta_{A},\gamma_{A},\delta_{A}$ (which can coincide), called {\em principal spinors}, such that (see \cite[eq. (8.1.1)]{PR2})
\begin{align}
\Psi_{ABCD} = \alpha_{(A}\beta_{B}\gamma_{C}\delta_{D)}.
\end{align}
Depending on how many of the principal spinors coincide, we can have different Petrov types: see \cite[eq. (8.1.6)]{PR2}. An analogous classification applies to $\tilde\Psi_{A'B'C'D'}$. The two classifications are independent for complex spaces and for Euclidean signature, but they coincide for Lorentz signature.

\subsection{Conformal transformations}

In later sections we will work with conformally re-scaled geometries, so it is convenient to collect  some formulas here.
Let $\Omega$ be a nowhere vanishing scalar field. A conformal transformation is a rescaling $g_{ab}\mapsto \hat{g}_{ab}:=\Omega^{2}g_{ab}$. The Levi-Civita connection of $\hat{g}_{ab}$ will be denoted by $\hat{\nabla}_{a}$. The relation between $\nabla_{a}$ and $\hat\nabla_{a}$ when acting on generic spinor fields $\mu_{A},\nu_{A'}$ is (cf. \cite[eq. (5.6.15)]{PR1}):
\begin{align}
 \hat\nabla_{AA'}\mu_{B} = \nabla_{AA'}\mu_{B} - \Upsilon_{A'B}\mu_{A}, 
 \qquad
 \hat\nabla_{AA'}\nu_{B'} = \nabla_{AA'}\nu_{B'} - \Upsilon_{AB'}\nu_{A'},
 \label{CTspinors}
\end{align}
where $\Upsilon_{a}:=\partial_{a}\log\Omega$. Both of the Weyl curvature spinors are conformally invariant, and the formulas for the transformations of the Ricci spinor $\Phi_{ABA'B'}$ and the scalar curvature $\Lambda$ are given in \cite[eq. (6.8.24)-(6.8.25)]{PR2}. In summary:
\begin{subequations}\label{CT}
\begin{align}
\hat\Psi_{ABCD} ={}& \Psi_{ABCD}, 
\qquad \hat{\tilde{\Psi}}_{A'B'C'D'} = \tilde{\Psi}_{A'B'C'D'}, \label{CTWeyl} \\
\hat\Phi_{ab} ={}& \Phi_{ab} - \hat\nabla_{a}\Upsilon_{b}-\Upsilon_{a}\Upsilon_{b} 
+ \tfrac{1}{4}\hat{g}_{ab}\hat{g}^{cd}(\hat\nabla_{c}\Upsilon_{d}+\Upsilon_{c}\Upsilon_{d}), 
\label{CTRspinor} \\
\Omega^{2}\hat\Lambda ={}& \Lambda - \tfrac{1}{4}\Omega^{3}\hat{g}^{ab}\hat\nabla_{a}\hat\nabla_{b}\Omega^{-1}.
\label{CTRscalar}
\end{align}
\end{subequations}

\section{Complex Killing spinors}
\label{sec:CKS}

Let $(M,g_{ab})$ be a 4-dimensional complex space. In abstract indices, equation \eqref{KSE0} is
\begin{align}
\nabla_{m}\psi = \tfrac{\lambda}{\sqrt{2}} \gamma_{m}\psi,
\qquad \lambda \in C^\infty(M). \label{KSE}
\end{align}
Choosing the representation \eqref{Diracmatrices} of Dirac matrices, and renaming indices for convenience, a short calculation shows that \eqref{KSE} is equivalent to 
\begin{subequations}\label{KSE2}
\begin{align}
 \nabla_{AA'}o_{B} ={}& \lambda \, \epsilon_{AB}\tilde{o}_{A'}, \label{KSE21} \\
 \nabla_{AA'}\tilde{o}_{B'} ={}& \lambda \, \epsilon_{A'B'}o_{A}. \label{KSE22}
\end{align}
\end{subequations}
We summarize the main results of this section in the following:
\begin{lemma}\label{lemma:CKS}
Let $(M,g_{ab})$ be a complex space admitting a non-trivial solution to the system \eqref{KSE2}. Then:
\begin{enumerate}
\item\label{item:Weyl} 
Both Weyl curvature spinors $\Psi_{ABCD}$ and $\tilde\Psi_{A'B'C'D'}$ are Petrov type N, with principal spinors $o_{A}$ and $\tilde{o}_{A'}$ respectively.
\item\label{item:Ricci} 
The trace-free Ricci spinor $\Phi_{ABA'B'}$ and the scalar curvature $\Lambda$ satisfy
\begin{subequations}
\begin{align}
& \Phi_{ABA'B'}o^{B} = -\tilde{o}_{(B'}\nabla_{A')A}\lambda, \label{riccio} \\
& \Phi_{ABA'B'}\tilde{o}^{B'} = - o_{(B}\nabla_{A)A'}\lambda, \label{ricciotilde} \\
& \tilde{o}^{A}\nabla_{AA'}\lambda + 2(\Lambda + \lambda^2)o_{A} = 0, \label{SCo} \\
& o^{A}\nabla_{AA'}\lambda + 2(\Lambda + \lambda^2)\tilde{o}_{A'} = 0. \label{SCotilde} 
\end{align}
\end{subequations}
\item\label{item:congruence} 
The null vector $\ell^{a}=o^{A}\tilde{o}^{A'}$ is: geodesic, divergence-free, twist-free, shear-free, and Killing.
\item\label{item:conformal} 
$(M,g_{ab})$ is conformal to a complex space with a parallel unprimed 2-spinor, and to another (generally different) complex space with a parallel primed 2-spinor.
\end{enumerate}
\end{lemma}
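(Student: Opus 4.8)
The strategy is to extract everything from the integrability of the first-order system \eqref{KSE2}: one covariant derivative for part (\ref{item:congruence}) and the first-order identities, a second covariant derivative together with the spinor Ricci identities of \cite[\S4.9]{PR1} for parts (\ref{item:Weyl})--(\ref{item:Ricci}), and the observation that \eqref{KSE21}--\eqref{KSE22} are twistor equations for part (\ref{item:conformal}).

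For part (\ref{item:congruence}), writing $\epsilon_{AB}=o_A\iota_B-\iota_Ao_B$ and using the tetrad \eqref{nulltetrad}, equations \eqref{KSE21}--\eqref{KSE22} become $\nabla_{AA'}o_B=\lambda(\ell_a\iota_B-\tilde{m}_a o_B)$ and $\nabla_{AA'}\tilde{o}_{B'}=\lambda(\ell_a\tilde{\iota}_{B'}-m_a\tilde{o}_{B'})$, whence $\nabla_a\ell_b=\lambda[\ell_a(m_b+\tilde{m}_b)-(m_a+\tilde{m}_a)\ell_b]$. This is antisymmetric in $ab$, so $\ell^a=o^A\tilde{o}^{A'}$ is Killing and hence also shear-free and divergence-free; $\ell^a\nabla_a\ell_b=0$ (because $\ell\cdot\ell=\ell\cdot m=\ell\cdot\tilde{m}=0$), so $\ell^a$ is geodesic; and $\ell_{[a}\nabla_b\ell_{c]}=0$ (two factors of $\ell$ in the antisymmetrisation), so $\ell^a$ is twist-free.

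For parts (\ref{item:Weyl}) and (\ref{item:Ricci}), differentiating \eqref{KSE21} and inserting \eqref{KSE22} gives $\nabla_{PP'}\nabla_{QQ'}o_C=(\nabla_{PP'}\lambda)\,\epsilon_{QC}\tilde{o}_{Q'}+\lambda^2\epsilon_{QC}\epsilon_{P'Q'}o_P$. Applying the primed curvature operator $\square_{A'B'}$ to $o_C$ and comparing with $\square_{A'B'}o_C=\Phi_{CDA'B'}o^D$ reproduces \eqref{riccio}. Applying the unprimed $\square_{AB}$ and using $\epsilon^{AB}\epsilon_{AB}=2$ gives, after comparison with $\square_{AB}o_C=\Psi_{ABCD}o^D+\Lambda(\cdots)$, an identity of the form $\Psi_{ABCD}o^D=\theta_{(A}\epsilon_{B)C}$ with $\theta_A=\tilde{o}^{A'}\nabla_{AA'}\lambda+2(\Lambda+\lambda^2)o_A$ (up to the sign conventions of \cite{PR1}). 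The left-hand side is totally symmetric in $ABC$ while the right-hand side is symmetric only in $AB$; imposing symmetry under $B\leftrightarrow C$ and invoking the cyclic identity $\theta_{[A}\epsilon_{BC]}=0$ forces simultaneously $\theta_A=0$, which is \eqref{SCo}, and $\Psi_{ABCD}o^D=0$, so $\Psi_{ABCD}$ is type N (or conformally flat) with principal spinor $o_A$. The mirror computation from \eqref{KSE22} yields \eqref{ricciotilde}, \eqref{SCotilde} and the analogous statements for $\tilde{\Psi}_{A'B'C'D'}$.

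For part (\ref{item:conformal}), symmetrising \eqref{KSE21} over $AB$ shows that $o_A$ solves the twistor equation, with ``momentum'' $\pi_{A'}=\lambda\tilde{o}_{A'}$. For $\hat{g}=\Omega^2g$ and $\Upsilon_a:=\partial_a\log\Omega$, \eqref{CTspinors} gives $\hat{\nabla}_{AA'}(\Omega o_B)=\Omega\,\epsilon_{AB}\,(o^C\Upsilon_{CA'}+\lambda\tilde{o}_{A'})$, so $\Omega o_B$ is parallel for $\hat{g}$ precisely when $o^A\Upsilon_{AA'}=-\lambda\tilde{o}_{A'}$; contracting with $\tilde{o}^{A'}$ and $\tilde{\iota}^{A'}$ this amounts to the pair $\ell^a\partial_a\log\Omega=0$ and $m^a\partial_a\log\Omega=-\lambda$, where $m^a=o^A\tilde{\iota}^{A'}$. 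A short computation from \eqref{KSE2} shows $[\ell,m]\propto\ell$, so this overdetermined system is consistent as soon as $\ell^a\nabla_a\lambda=0$ --- which is \eqref{SCotilde} contracted with $\tilde{o}^{A'}$ --- and hence a nowhere-vanishing local solution $\Omega$ exists; the rescaled spinor $\Omega o_A$ is then parallel for $\hat{g}$. The same argument run on \eqref{KSE22} produces a (generally different) conformal factor making $\tilde{o}_{A'}$ parallel. I expect part (\ref{item:conformal}) to be the main obstacle: one has to see that parallelisability collapses to a determined local problem whose consistency is guaranteed by parts (\ref{item:Ricci}) and (\ref{item:congruence}), and one must handle conformal weights and signs in \eqref{CTspinors} carefully so that the rescaled spinor is genuinely covariantly constant; the rest is routine $2$-spinor manipulation.
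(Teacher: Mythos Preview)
Your proof is correct and follows essentially the same strategy as the paper: second derivatives of \eqref{KSE2} combined with the spinor Ricci identities for parts \eqref{item:Weyl}--\eqref{item:Ricci}, direct computation of $\nabla_a\ell_b$ for part \eqref{item:congruence}, and an integrability check for $o^A\nabla_{AA'}\log\Omega=-\lambda\tilde{o}_{A'}$ for part \eqref{item:conformal}. The only substantive difference is in part \eqref{item:conformal}: the paper invokes \cite[Lemma~(7.3.20)]{PR2}, which gives the integrability criterion for $o^A\nabla_{AA'}f=\alpha_{A'}$ when $o^A$ is geodesic shear-free and simply verifies that both sides of that criterion vanish, whereas you argue directly by Frobenius via $[\ell,m]$. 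Your route is more self-contained but requires the commutator computation (which does go through: the $m$-component of $\ell^b\nabla_b m^a$ vanishes because $\tilde{o}_{A'}\ell^b\nabla_b\tilde\iota^{A'}=-\tilde\iota^{A'}\ell^b\nabla_b\tilde{o}_{A'}=0$ from \eqref{KSE22}); the paper's citation is quicker. One harmless sign slip: since $\tilde\iota^{A'}\tilde{o}_{A'}=-1$, the second equation should read $m^a\partial_a\log\Omega=+\lambda$, which does not affect the consistency condition $\ell^a\nabla_a\lambda=0$.
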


\begin{proof} 
We start with item \eqref{item:Weyl}.
Applying another derivative in one of the equations in \eqref{KSE2}, and using the other equation, we get
\begin{subequations}
\begin{align}
 \nabla_{AA'}\nabla_{BB'}o_{C} ={}& (\nabla_{AA'}\lambda)\, \epsilon_{BC} \tilde{o}_{B'}+ \lambda^{2}\epsilon_{BC}\epsilon_{A'B'}o_{A}, \label{dd1} \\
 \nabla_{AA'}\nabla_{BB'}\tilde{o}_{C'} ={}& (\nabla_{AA'}\lambda)\, \epsilon_{B'C'} o_{B}+ \lambda^{2}\epsilon_{B'C'}\epsilon_{AB}\tilde{o}_{A'}. \label{dd2}
\end{align}
\end{subequations}
Contracting \eqref{dd1} with $\epsilon^{A'B'}$ and symmetrizing over $ABC$, and doing something analogous with \eqref{dd2}, after using the curvature decomposition \eqref{decompRiemann} we get
\begin{subequations}
\begin{align}
\Psi_{ABCD}o^{D} &=\tilde\Psi_{A'B'C'D'}\tilde{o}^{D'}=0,
\end{align}
\end{subequations}
which implies that 
\begin{equation}\label{RestrictionsWeyl}
 \Psi_{ABCD} = F \, o_{A}o_{B}o_{C}o_{D}, \qquad 
\tilde\Psi_{A'B'C'D'} = \tilde{F} \, \tilde{o}_{A'}\tilde{o}_{B'}\tilde{o}_{C'}\tilde{o}_{D'},
\end{equation}
for some complex functions $F, \tilde{F}$. Hence both $\Psi_{ABCD} $ and $\tilde\Psi_{A'B'C'D'}$ are type N. 

\smallskip
For item \eqref{item:Ricci}, contraction of \eqref{dd1} with $\epsilon^{AB}$ and symmetrization over $A'B'$ leads to \eqref{riccio}, while contraction of \eqref{dd1} with $\epsilon^{A'B'}$ and symmetrization over $AB$ leads to \eqref{SCo}. Analogous operations applied to \eqref{dd2} lead to \eqref{ricciotilde} and \eqref{SCotilde}. 

\smallskip
Next we focus on item \eqref{item:congruence}. From \eqref{KSE2} we deduce 
$\nabla_{AA'}o^{A} = 2\lambda \tilde{o}_{A'}$ and $\nabla_{AA'}\tilde{o}^{A'} = 2\lambda o_{A}$. It then follows that $\nabla_{a}\ell^{a}=\nabla_{AA'}(o^{A}\tilde{o}^{A'})=0$, so $\ell^a$ is divergence-free. Contracting \eqref{KSE21}  with $o^{A}o^{B}$ and \eqref{KSE22} with $\tilde{o}^{A'}\tilde{o}^{B'}$, we immediately get 
$o^{A}o^{B}\nabla_{AA'}o_{B} = 0 = \tilde{o}^{A'}\tilde{o}^{B'}\nabla_{AA'}\tilde{o}_{B'}$,
which are the necessary and sufficient conditions for the null congruence to be geodesic and shear-free \cite[Section 7.3]{PR2}. From \eqref{KSE2} we also get
\begin{align}
\nabla_{a}\ell_{b} = \lambda(\tilde{o}_{A'}\tilde{o}_{B'}\epsilon_{AB}+o_{A}o_{B}\epsilon_{A'B'}),
\label{nablaell}
\end{align} 
so $\nabla_{a}\ell_{b}=\nabla_{[a}\ell_{b]}$, which gives $\nabla_{(a}\ell_{b)}=0$ and so, $\ell_{a}$ is Killing. Finally, from the above expression for $\nabla_{a}\ell_{b}$ we also deduce $(\nabla_{[a}\ell_{b]})(\nabla^{a}\ell^{b})=0$, so the congruence is twist-free (cf. \cite[eq. (7.1.27) and pp. 178]{PR2}).

\smallskip
Finally, let us prove item \eqref{item:conformal}. Consider a conformal transformation $\hat{g}_{ab}=\Omega^2g_{ab}$, and let $\hat{o}_{A}:=\Omega{o}_{A}$. Using the first equation in \eqref{CTspinors}, and also eq. \eqref{KSE21}, we have
\begin{align}
\hat\nabla_{AA'}\hat{o}_{B}=\Omega \, \epsilon_{AB}(\lambda\tilde{o}_{A'}+o^{C}\nabla_{A'C}\log\Omega).
\label{hatds}
\end{align}
Thus, $\hat\nabla_{AA'}\hat{o}_{B}=0$ if and only if $o^{A}\nabla_{AA'}f=\alpha_{A'}$, where $f\equiv\log\Omega$ and $\alpha_{A'}\equiv-\lambda\tilde{o}_{A'}$. The necessary and sufficient conditions for this equation to admit a solution $f$ are given in \cite[Lemma (7.3.20)]{PR2} (recall that $o^{A}$ is geodesic shear-free): they are $o^{A}o^{B}\nabla_{A}^{A'}\alpha_{A'}=\alpha_{A'}o^{A}\nabla_{A}^{A'}o^{B}$. Using \eqref{KSE2}, it is straightforward to show that this equation is satisfied (both sides are independently zero), so a solution $f$ exists, and thus the space $(M,\hat{g}_{ab})$ has a parallel unprimed spinor $\hat{o}_{A}$. An analogous calculation, using a generally different conformal factor $\tilde\Omega$, shows that the space $(M,\tilde{g}_{ab})$ with $\tilde{g}_{ab}=\tilde\Omega^2g_{ab}$ has a parallel primed spinor $\tilde\Omega\tilde{o}_{A'}$.
\end{proof}

\begin{remark}\label{remark:complex}
We make note of the following facts.
\begin{itemize}
\item If $\lambda$ is constant, the trace-free Ricci spinor is null, $\Phi_{ab} = G\ell_a\ell_b$ for some function $G$, and the scalar curvature satisfies $\Lambda=-\lambda^2$.
\item From \eqref{riccio} (or \eqref{ricciotilde}), we see that the Einstein condition $\Phi_{ABA'B'}=0$ can be satisfied only if $\lambda$ is constant, in which case the problem reduces to the standard Killing spinor equation.
\item\label{item:kundt} 
From item \ref{item:congruence} in Lemma \ref{lemma:CKS} we see that $(M,g_{ab})$ is a complex Kundt space (i.e. a complex space with a null geodesic congruence that is expansion-free, twist-free, and shear-free).
\item\label{item:PS}  
In \cite[Prop. 2.2]{A22}, the structure of the metric and curvature of a complex space with a parallel 2-spinor is given. The geometry depends on a single complex function $\Theta$, in terms of which the line element can be expressed as in \cite[eq. (2.13)]{A22}. The space from Lemma \ref{lemma:CKS} is conformal to this geometry.
\end{itemize}
\end{remark}

\subsection{Riemannian signature}
We can specialize Lemma \ref{lemma:CKS} to Riemannian signature by introducing an appropriate spinor conjugation \cite{Woodhouse}. This is a quaternionic operation $\dagger$ that preserves chirality. A spinor $\varphi_{A}$ and its complex conjugate $\varphi^{\dagger}_{A}$ are linearly independent, and analogously for primed spinors. The conjugation extends to higher-valence spinors, where it is involutive in the case of an even number of indices. The two Petrov classifications are independent, but $\Psi_{ABCD}$ and $\tilde\Psi_{A'B'C'D'}$ are invariant under $\dagger$, which means that there is no Petrov type N. As a direct application of Lemma \ref{lemma:CKS}, we have:

\begin{proposition}\label{prop:riemannian}
If a Riemannian 4-space $(M,g_{ab})$ admits a non-trivial solution $(o_A,\tilde{o}_{A'})$ to \eqref{KSE2} with Killing function $\lambda$, then it is conformally flat, and the Ricci spinor is 
\begin{align}\label{ricciriemannian}
 \Phi_{ABA'B'} = \frac{1}{(o_Co^{\dagger C})} \left[ o^{\dagger}_{B}\tilde{o}_{(B'}\nabla_{A')A}\lambda - o_{B}\tilde{o}^{\dagger}_{(B'}\nabla_{A')A}\bar\lambda \right].
\end{align}
If $\lambda$ is purely imaginary, then it is constant and $(o_A,\tilde{o}_{A'})$ is an ordinary Killing spinor.
\end{proposition}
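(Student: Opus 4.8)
The plan is to run everything through Lemma~\ref{lemma:CKS} together with the Euclidean conjugation $\dagger$, using that $\dagger$ preserves chirality, is conjugate-linear, obeys $\varphi^{\dagger\dagger}_A=-\varphi_A$ on valence-one spinors — so that $o_A$ and $o^\dagger_A$ are pointwise linearly independent and $o_Ao^{\dagger A}\neq0$ — and fixes real tensors: $\Psi_{ABCD}^\dagger=\Psi_{ABCD}$, $\tilde\Psi^\dagger_{A'B'C'D'}=\tilde\Psi_{A'B'C'D'}$, $\Phi^\dagger_{ABA'B'}=\Phi_{ABA'B'}$, $\Lambda\in\mathbb{R}$ (so $\Lambda^\dagger=\Lambda$). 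For conformal flatness: \eqref{RestrictionsWeyl} gives $\Psi_{ABCD}=F\,o_Ao_Bo_Co_D$, and applying $\dagger$ yields $F\,o_Ao_Bo_Co_D=\bar F\,o^\dagger_Ao^\dagger_Bo^\dagger_Co^\dagger_D$; the two rank-one quartics are linearly independent because $\{o_A,o^\dagger_A\}$ is a basis, so $F\equiv0$ and $\Psi_{ABCD}\equiv0$, and likewise $\tilde\Psi_{A'B'C'D'}\equiv0$, i.e. the Weyl tensor vanishes.

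For the Ricci spinor, apply $\dagger$ to \eqref{riccio}; using $\Phi^\dagger_{ABA'B'}=\Phi_{ABA'B'}$ one gets $\Phi_{ABA'B'}o^{\dagger B}=-\tilde o^\dagger_{(B'}\nabla_{A')A}\bar\lambda$. Since $\{o^B,o^{\dagger B}\}$ spans $\mathcal S$, $\Phi_{ABA'B'}$ is determined by its contractions with $o^B$ and $o^{\dagger B}$; a one-line check — using $o_Bo^{\dagger B}\neq0$ and $o^\dagger_Bo^{\dagger B}=o_Bo^B=0$ — shows that the right-hand side of \eqref{ricciriemannian} contracted with $o^B$ reproduces \eqref{riccio} and contracted with $o^{\dagger B}$ reproduces the $\dagger$-conjugate of \eqref{riccio}, so \eqref{ricciriemannian} holds.

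Now let $\lambda$ be purely imaginary, so $\bar\lambda=-\lambda$ and $\bar\lambda^2=\lambda^2$. Applying $\dagger$ to \eqref{SCo}, \eqref{SCotilde} (and to \eqref{KSE21}, \eqref{KSE22}) and using $\Lambda^\dagger=\Lambda$ and $(\nabla_{AA'}\lambda)^\dagger=-\nabla_{AA'}\lambda$ gives the companion identities $o^{\dagger A}\nabla_{AA'}\lambda=2(\Lambda+\lambda^2)\tilde o^\dagger_{A'}$, $\tilde o^{\dagger A'}\nabla_{AA'}\lambda=2(\Lambda+\lambda^2)o^\dagger_A$, as well as $\nabla_{AA'}o^\dagger_B=-\lambda\,\epsilon_{AB}\tilde o^\dagger_{A'}$ and $\nabla_{AA'}\tilde o^\dagger_{B'}=-\lambda\,\epsilon_{A'B'}o^\dagger_A$. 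Expanding $\nabla_{AA'}\lambda$ in the dyad $\{o_A,o^\dagger_A\}$ by means of \eqref{SCotilde} and its conjugate, and then in $\{\tilde o_{A'},\tilde o^\dagger_{A'}\}$ by means of \eqref{SCo} and its conjugate, yields the two expressions
\[
\nabla_{AA'}\lambda=\frac{2(\Lambda+\lambda^2)}{o_Co^{\dagger C}}\,v_{AA'}=\frac{2(\Lambda+\lambda^2)}{\tilde o_{C'}\tilde o^{\dagger C'}}\,v_{AA'},\qquad v_{AA'}:=o_A\tilde o^\dagger_{A'}+o^\dagger_A\tilde o_{A'}\neq0 ,
\]
whence $(\Lambda+\lambda^2)\big(o_Co^{\dagger C}-\tilde o_{C'}\tilde o^{\dagger C'}\big)=0$.

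To conclude, a short computation from \eqref{KSE21}, \eqref{KSE22} and the conjugate equations above gives $\nabla_{AA'}(o_Co^{\dagger C})=-\lambda\,v_{AA'}=\nabla_{AA'}(\tilde o_{C'}\tilde o^{\dagger C'})$, so $o_Co^{\dagger C}-\tilde o_{C'}\tilde o^{\dagger C'}$ is constant. If it is nonzero then $\Lambda+\lambda^2\equiv0$, hence $\nabla_{AA'}\lambda=0$ and $\lambda$ is constant; and once $\lambda$ is constant, $\Lambda=-\lambda^2$ by the first item of Remark~\ref{remark:complex}, so \eqref{KSE2} is the ordinary Killing spinor equation. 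The step I expect to be the main obstacle is the remaining ``balanced'' case $o_Co^{\dagger C}\equiv\tilde o_{C'}\tilde o^{\dagger C'}$, in which the second-order identities above degenerate: there one must feed back into the \emph{full} system \eqref{KSE2} — not just its second-order consequences — for instance by passing, via the conformal rescalings of Lemma~\ref{lemma:CKS}\,\eqref{item:conformal}, to the flat representative and checking that \eqref{KSE21}--\eqref{KSE22} together with $\nabla_{AA'}(o_Co^{\dagger C})=-\lambda\,v_{AA'}$ are compatible only when $\lambda\equiv0$, thereby forcing $\lambda$ to be constant.
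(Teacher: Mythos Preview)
Your treatment of conformal flatness and of the Ricci-spinor formula \eqref{ricciriemannian} is correct and matches the paper's argument: both hinge on $\dagger$-invariance of the curvature spinors and on linear independence of $\{o_A,o^\dagger_A\}$.

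The gap is in the purely imaginary case. Your route produces only the product relation $(\Lambda+\lambda^2)\big(o_Co^{\dagger C}-\tilde o_{C'}\tilde o^{\dagger C'}\big)=0$, and you explicitly leave the ``balanced'' branch $o_Co^{\dagger C}=\tilde o_{C'}\tilde o^{\dagger C'}$ unresolved; the sketch you offer for that branch (pass to a conformally flat representative and hope the first-order system forces $\lambda\equiv0$) is not an argument, and in fact the conclusion you would need there is $\lambda=\mathrm{const.}$, not $\lambda=0$. So as written the proof is incomplete.

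The paper bypasses this entirely with a one-line trick you did not exploit: contract \eqref{ricciriemannian} with $\epsilon^{AB}$ and use the symmetry $\Phi_{ABA'B'}=\Phi_{(AB)A'B'}$, giving
\[
0=\tilde o_{(B'}o^{\dagger A}\nabla_{A')A}\lambda-\tilde o^\dagger_{(B'}o^{A}\nabla_{A')A}\bar\lambda .
\]
With $\bar\lambda=-\lambda$ and \eqref{SCotilde} (plus its $\dagger$-conjugate), both terms collapse to multiples of $(\Lambda+\lambda^2)\,\tilde o_{(A'}\tilde o^\dagger_{B')}$, yielding $0=-4(\Lambda+\lambda^2)\,\tilde o_{(A'}\tilde o^\dagger_{B')}$. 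Since $\tilde o_{A'}$ and $\tilde o^\dagger_{A'}$ are linearly independent, the symmetric product $\tilde o_{(A'}\tilde o^\dagger_{B')}$ is nonzero, forcing $\Lambda+\lambda^2=0$ outright; then \eqref{SCotilde} gives $\nabla_a\lambda=0$. The point is that the \emph{symmetry of $\Phi$ in $AB$} is an extra algebraic constraint not visible from the contracted identities \eqref{riccio}, \eqref{SCo}, \eqref{SCotilde} alone, and it is exactly what kills your balanced case.
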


\begin{proof}
The conformally flat condition follows from item \ref{item:Weyl} in Lemma \ref{lemma:CKS}: since there is no Petrov type N, both Weyl spinors must vanish. 
Formula \eqref{ricciriemannian} follows from eq. \eqref{riccio} and its complex conjugate, which is $\Phi_{ABA'B'}o^{\dagger B}=-\tilde{o}^{\dagger}_{(B'}\nabla_{A')A}\bar\lambda$. Contracting \eqref{ricciriemannian} with $\epsilon^{AB}$ and using that $\Phi_{ABA'B'}=\Phi_{(AB)A'B'}$, we get:
\begin{align}
 0 = \tilde{o}_{(B'}o^{\dagger A}\nabla_{A')A}\lambda - \tilde{o}^{\dagger}_{(B'}o^{A}\nabla_{A')A}\bar\lambda.
\end{align}
If $\lambda$ is purely imaginary, $\bar\lambda=-\lambda$, then using \eqref{SCotilde} the above equation reduces to $0=-4(\Lambda+\lambda^2)\tilde{o}_{(A'}\tilde{o}^{\dagger}_{B')}$, which implies $\Lambda+\lambda^2=0$ and, using \eqref{SCotilde} again, gives $\lambda={\rm const}$.
\end{proof}

\begin{remark}\label{remark:lambdareal}
In \cite{Rademacher}, it is mentioned that the case in which the problem reduces to the ordinary Killing spinor equation corresponds to $\lambda$ {\em real}. The difference with Prop. \ref{prop:riemannian} originates in our choice of representation of Dirac matrices \eqref{Diracmatrices}, see Remark \ref{Remark:AltRep}.
\end{remark}

\section{Lorentzian signature}
\label{sec:KSL}

Following \cite{PR1,PR2}, our convention for Lorentz signature is $(+---)$.
Here the spinor conjugation \cite{Woodhouse}, denoted by an overbar, is involutive, but it interchanges chirality, so that if $o_{A}\in\mathbb{S}$ then $\overline{(o_{A})}\equiv\bar{o}_{A'}\in\mathbb{S}'$, etc. This operation allows to define a {\em Majorana} (or {\em real}) spinor as a Dirac spinor \eqref{DS} such that $\tilde{o}_{A'}=\bar{o}_{A'}$.
Since the Riemann tensor is real, from \eqref{decompRiemann} we deduce that 
\begin{align}
\tilde{\Psi}_{A'B'C'D'}=\overline{(\Psi_{ABCD})}, \qquad
\overline{(\Phi_{A'B'CD})}=\Phi_{ABC'D'}, \qquad \bar{\Lambda} = \Lambda.
\label{RCL}
\end{align}
Thus, $\Phi_{ab}$ and $\Lambda$ are real, and $\Psi_{ABCD}$ and $\tilde{\Psi}_{A'B'C'D'}\equiv\bar\Psi_{A'B'C'D'}$ are complex-conjugates, so there is only one Petrov classification. The results of Lemma \ref{lemma:CKS} now specialize to:

\begin{theorem}\label{result:lorentzian}
Let $(M,g_{ab})$ be a Lorentzian space-time admitting a non-trivial solution $(o_{A},\tilde{o}_{A'})$ to \eqref{KSE2}, with Killing function $\lambda$. Then:
\begin{enumerate}
\item\label{item:lorentzian}
If $(M,g_{ab})$ is not conformally flat, then $\lambda$ is real and $(o_{A},\tilde{o}_{A'})$ can be taken to be a Majorana spinor.
\item\label{item:ppwave}
$(M,g_{ab})$ is conformal to a real Lorentzian space-time with a parallel 2-spinor.
\item\label{item:metric} 
There exist two real coordinates $u,v$, a complex coordinate $\zeta$, and two real functions $H$ and $\Omega$ such that the line element of the space-time metric $g_{ab}$ is 
\begin{align}
 \d{s}^2 = 2\Omega^{-2}(\d{u}\d{v}-\d\zeta\d\bar\zeta) + H\d{v}^2. \label{STmetric}
\end{align}
\end{enumerate}
\end{theorem}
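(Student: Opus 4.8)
The plan is to combine the three items of Lemma~\ref{lemma:CKS} with the Lorentzian reality conditions \eqref{RCL}. For item~\eqref{item:lorentzian}, suppose $(M,g_{ab})$ is not conformally flat, so at least one of the Weyl spinors is non-zero; by \eqref{RCL} then both are non-zero, and by item~\eqref{item:Weyl} of Lemma~\ref{lemma:CKS} we have $\Psi_{ABCD}=F\,o_Ao_Bo_Co_D$ and $\tilde\Psi_{A'B'C'D'}=\tilde F\,\tilde o_{A'}\tilde o_{B'}\tilde o_{C'}\tilde o_{D'}$ with $F,\tilde F$ not identically zero. Taking the spinor conjugate of the first and comparing with the second (using $\overline{\Psi_{ABCD}}=\tilde\Psi_{A'B'C'D'}$) forces $\tilde o_{A'}$ to be proportional to $\bar o_{A'}$ at every point where $F\neq 0$, say $\tilde o_{A'}=h\,\bar o_{A'}$ for a scalar $h$. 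One then rescales: replacing $o_A$ by $c\,o_A$ changes $\tilde o_{A'}$ by the conjugate factor $\bar c$ in the pairing dictated by \eqref{KSE2}, and I would check that one can choose $c$ to arrange $\tilde o_{A'}=\bar o_{A'}$, i.e. the spinor is Majorana. Feeding $\tilde o_{A'}=\bar o_{A'}$ back into \eqref{KSE21}–\eqref{KSE22}, the two equations \eqref{KSE21} and the conjugate of \eqref{KSE22} must be consistent; comparing them shows $\bar\lambda=\lambda$, so $\lambda$ is real. (The argument extends off the zero set of $F$ by continuity / analyticity of the structure.)

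For item~\eqref{item:ppwave}: by item~\eqref{item:conformal} of Lemma~\ref{lemma:CKS}, $(M,g_{ab})$ is conformal to a complex space $(M,\hat g_{ab})$, $\hat g_{ab}=\Omega^2 g_{ab}$, carrying a parallel unprimed spinor $\hat o_A=\Omega o_A$. Since we are now in Lorentz signature with $\lambda$ real and the spinor Majorana, I would verify that the conformal factor $\Omega$ produced in the proof of Lemma~\ref{lemma:CKS} (solving $o^A\nabla_{AA'}\log\Omega=-\lambda\tilde o_{A'}$) can be taken real, so that $\hat g_{ab}$ is a real Lorentzian metric; then $\hat o_A$ parallel automatically gives $\overline{\hat o}_{A'}=\Omega\bar o_{A'}=\Omega\tilde o_{A'}$ parallel too, so $(M,\hat g_{ab})$ is a real Lorentzian space-time with a parallel spinor. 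This is exactly the pp-wave situation.

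For item~\eqref{item:metric}: a real Lorentzian 4-manifold with a parallel spinor $\hat o_A$ has $\hat\ell_a=\hat o_A\bar{\hat o}_{A'}$ a covariantly constant null vector, hence is a pp-wave, and its metric takes the Brinkmann form $\d\hat s^2 = 2\,\d u\,\d v - 2\,\d\zeta\,\d\bar\zeta + \hat H\,\d v^2$ in suitable real coordinates $u,v$ and complex $\zeta$ (this is the structure referenced in Remark~\ref{remark:complex}, item on parallel spinors, specialized to Lorentz signature). Undoing the conformal rescaling, $g_{ab}=\Omega^{-2}\hat g_{ab}$, gives
\begin{align}
 \d s^2 = \Omega^{-2}\big(2\,\d u\,\d v - 2\,\d\zeta\,\d\bar\zeta + \hat H\,\d v^2\big),
\end{align}
and absorbing $\Omega^{-2}\hat H$ into a new real function $H$ yields \eqref{STmetric}. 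The main obstacle I expect is the reality bookkeeping in items~\eqref{item:lorentzian} and~\eqref{item:ppwave}: one must track carefully how the complex rescaling freedom $o_A\mapsto c\,o_A$ interacts with the pairing in \eqref{KSE2} and with the conjugation, to be sure the Majorana normalization \emph{and} real $\Omega$ can be achieved simultaneously, and to handle the conformally flat locus where $o_A$ is not pinned down by the Weyl tensor.
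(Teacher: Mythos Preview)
Your overall strategy matches the paper's, but two of the steps you flag as routine are in fact the substance of the proof, and one of them is set up incorrectly.

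\textbf{Item~\eqref{item:lorentzian}.} The rescaling does not work as you describe. The only scaling preserving \eqref{KSE2} with the \emph{same} $\lambda$ is $(o_A,\tilde o_{A'})\mapsto(c\,o_A,\,c\,\tilde o_{A'})$ with $c$ a \emph{constant} (a non-constant $c$ adds a $(\partial c)\,o_B$ term to \eqref{KSE21}); in particular $\tilde o_{A'}$ scales by $c$, not by $\bar c$. Writing $\tilde o_{A'}=\mu\,\bar o_{A'}$, this sends $\mu\mapsto (c/\bar c)\,\mu$, so it only rotates the phase of $\mu$ and cannot force $|\mu|=1$. Thus your scheme ``first normalize to Majorana, then read off $\lambda=\bar\lambda$'' is circular: you cannot reach the Majorana normalization without already knowing $|\mu|=1$. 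The paper proceeds in the opposite order. It first shows $\mu$ is constant (from $\nabla_{A(A'}\tilde o_{B')}=0$ and $\nabla_{A(A'}\bar o_{B')}=0$, which follow from \eqref{KSE22} and the conjugate of \eqref{KSE21}), and then computes $\nabla_{AA'}\tilde o_{B'}$ two ways to obtain the key identity
\[
\lambda=\bar\lambda\,|\mu|^2.
\]
This single equation gives both conclusions at once: $\lambda/\bar\lambda=|\mu|^2>0$ forces $\lambda\in\mathbb R$, and then $|\mu|=1$. Only now does the phase rotation $o_A\mapsto e^{-i\alpha/2}o_A$ (with $\mu=e^{i\alpha}$) yield a Majorana solution.

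\textbf{Item~\eqref{item:ppwave}.} ``I would verify that $\Omega$ can be taken real'' is not a bookkeeping step; it is the whole content here, and the paper's argument uses a genuine geometric input. The equation to solve is $o^A\nabla_{AA'}\log\Omega+\lambda\,\bar o_{A'}=0$, a complex equation for a real unknown, so existence is not automatic. The paper observes that $\ell^a=o^A\bar o^{A'}$ is twist-free (Lemma~\ref{lemma:CKS}), hence hypersurface-orthogonal, so there is a \emph{real} $\mathcal F$ with $\nabla_{[a}(\mathcal F\ell_{b]})=0$. Expanding this using \eqref{nablaell} (with $\tilde o_{A'}=\bar o_{A'}$ from item~\eqref{item:lorentzian}) shows it is equivalent to
\[
o^A\nabla_{AA'}\log\sqrt{\mathcal F}+\lambda\,\bar o_{A'}=0,
\]
so $\Omega=\sqrt{\mathcal F}$ is the desired real conformal factor. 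Without invoking twist-freeness you have no mechanism to produce a real solution.

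Item~\eqref{item:metric} is fine and matches the paper.
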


\begin{proof}
For item \eqref{item:lorentzian}, suppose that $(M,g_{ab})$ is not conformally flat. Then using the reality conditions \eqref{RCL} in \eqref{RestrictionsWeyl}, it follows that 
\begin{align}
 \tilde{o}_{A'} = \mu \bar{o}_{A'}, \qquad \tilde{F}\mu^4= \bar{F},  
 \label{RCLKS}
\end{align}
for some scalar field $\mu$. Now, the complex conjugate of \eqref{KSE21} is 
\begin{align}
 \nabla_{AA'}\bar{o}_{B'}=\bar\lambda\epsilon_{A'B'}\bar{\tilde{o}}_{A}. \label{CCKSE21}
\end{align}
From \eqref{KSE22} and \eqref{CCKSE21} we see that $\nabla_{A(A'}\tilde{o}_{B')}=0$ and $\nabla_{A(A'}\bar{o}_{B')}=0$. Using $\tilde{o}_{A'} = \mu \bar{o}_{A'}$ we then get $\bar{o}_{(B'}\nabla_{A')A}\mu=0$, which implies $\nabla_{a}\mu=0$ and thus $\mu$ is constant. Therefore: 
\begin{align}
\lambda\epsilon_{A'B'}o_{A} = \nabla_{AA'}\tilde{o}_{B'}=\mu\nabla_{AA'}\bar{o}_{B'} 
= \mu\bar\lambda\epsilon_{A'B'}\bar{\tilde{o}}_{A} = \bar\lambda|\mu|^2\epsilon_{A'B'}o_{A}
\end{align}
where in the first equality we used \eqref{KSE22}, in the second \eqref{RCLKS}, in the third \eqref{CCKSE21}, and in the fourth $\bar{\tilde{o}}_{A}=\bar{\mu}o_{A}$. It follows that 
$\lambda=\bar\lambda|\mu|^2$.
Writing $\lambda=Re^{\i\theta}$, we get $e^{2\i\theta}=|\mu|^2\in\mathbb{R}_{>0}$, which implies $\theta=0\mod\pi$, thus $\lambda\in\mathbb{R}$. It also follows that $|\mu|^2=1$, so $\mu=e^{\i\alpha}$ for some real constant $\alpha$. 
Defining $\phi_{A}:=e^{-\i\alpha/2}o_{A}$, we get $\nabla_{AA'}\phi_{B}=\lambda\epsilon_{AB}\bar{\phi}_{A'}$. Since $\lambda$ is real, the Dirac spinor $\psi'=(\phi_{A},\bar{\phi}_{A'})^{\rm t}$ is a solution to \eqref{KSE2} and is Majorana.

For item \ref{item:ppwave}, note that although the result is similar to item \eqref{item:conformal} in Lemma \ref{lemma:CKS}, it does not automatically follow from there since we must show that the scalar field $f=\log\Omega$ used in the proof of that lemma can now be chosen to be real. From \eqref{hatds}, $f$ must satisfy
\begin{align}\label{EqForRealCF}
 o^{A}\nabla_{AA'}f + \lambda\bar{o}_{A'} = 0.
\end{align}
If a real solution $f$ to this equation exists, then the proof of Lemma \ref{lemma:CKS} applies and the spinor $\hat{o}_{A}=\Omega o_{A}$ is parallel, with $\Omega$ real. Thus we can focus on showing that a real $f$ solving \eqref{EqForRealCF} can be found. To do this, we will use the fact that the null congruence defined by $\ell^a=o^{A}\bar{o}^{A'}$ is twist-free. From \cite[pp. 179-180]{PR2}, a null congruence $k^{a}$ is twist-free iff it is hyper-surface orthogonal iff $k_a$ is proportional to a gradient. Since we showed that $\ell^{a}$ is twist-free, it follows that there must exist a {\em real} scalar field $\mathcal{F}$ such that $\mathcal{F}\ell_{a}$ is a closed 1-form, i.e. $\nabla_{[a}(\mathcal{F}\ell_{b]})=0$. To see the equation that $\mathcal{F}$ satisfies, we first note the general identity 
\begin{align*}
\nabla_{[a}(\mathcal{F}\ell_{b]}) = \tfrac{1}{2}\nabla_{C(A'}[\mathcal{F}\ell^{C}_{B')}]\epsilon_{AB} + \tfrac{1}{2}\nabla_{C'(A}[\mathcal{F}\ell^{C'}_{B)}]\epsilon_{A'B'}.
\end{align*}
Since $\mathcal{F}$ and $\ell^a$ are real, the two terms in the right side are complex conjugates, so $\nabla_{[a}(\mathcal{F}\ell_{b]})=0$ if and only if $\nabla_{C(A'}[\mathcal{F}\ell^{C}_{B')}]=0$. 
From \eqref{nablaell} we see that $\nabla_{A(A'}\ell^{A}_{B')}=2\lambda \bar{o}_{A'}\bar{o}_{B'}$. Using $\ell^{a}=o^{A}\bar{o}^{A'}$, we then have
\begin{align}
\nonumber \nabla_{A(A'}[\mathcal{F}\ell^{A}_{B')}] ={}& 
\ell^{A}_{(B'}\nabla_{A')A}\mathcal{F} + \mathcal{F}\nabla_{A(A'}\ell^{A}_{B')} \\
\nonumber ={}& \bar{o}_{(B'}o^{A}\nabla_{A')A}\mathcal{F} + 2\lambda \, \mathcal{F} \, \bar{o}_{A'}\bar{o}_{B'} \\
={}& 2\mathcal{F}\bar{o}_{(B'}\left[o^{A}\nabla_{A')A}\log\sqrt{\mathcal{F}} + \lambda \bar{o}_{A')} \right].
\end{align}
The last line vanishes if and only if the term inside the square brackets vanishes. Thus we can take $f=\log\sqrt{F}$ in \eqref{EqForRealCF} (which means $\Omega=\sqrt{\mathcal{F}}$).

Finally, concerning item \eqref{item:metric}, notice that $\hat{o}_{A}=\Omega o_{A}$ is a parallel spinor for the metric $\hat{g}_{ab}=\Omega^2 g_{ab}$, and since $\Omega$ is real, $\hat{g}_{ab}$ is a Lorentzian pp-wave. In appendix \ref{appendix:ppwaves} we recall from \cite{A22} some identities for pp-waves, including the definition of the coordinates $u,v,\zeta,\bar\zeta$ (see \eqref{coordpp}) and the line element \eqref{ppwave}. Then \eqref{STmetric} follows from \eqref{ppwave} and $g_{ab}=\Omega^{-2}\hat{g}_{ab}$, setting $H=\Omega^{-2}\hat{H}$.
\end{proof}

\begin{remark}
From the above we deduce the following:
\begin{itemize}
\item If $(M,g_{ab})$ is conformally flat, then \eqref{RCLKS} does not necessarily hold, and $(o_{A},\tilde{o}_{A'})$ is not necessarily Majorana. 
\item If $(M,g_{ab})$ is not conformally flat, \eqref{KSE2} is equivalent to the single equation
\begin{align}
 \nabla_{AA'}o_{B}  = \lambda \, \epsilon_{AB}\bar{o}_{A'}, \qquad \lambda=\bar{\lambda} \in C^\infty(M).
 \label{realKSE}
\end{align}
\item The type N property of the Weyl tensor and its conformal invariance allow us to interpret the gravitational degrees of freedom as waves, cf. \cite[Sec. 9.7]{PR2}. The coordinates in \eqref{STmetric} are well-adapted to this: $u$ is an affine parameter along the rays, the hypersurfaces of constant $v$ are the wave surfaces, and $\zeta$ is a complex coordinate transverse to the propagation of the waves. 
\end{itemize}
\end{remark}

To find an expression for the conformal factor $\Omega$ in \eqref{STmetric}, we recall that $\Omega$ is defined by eq. \eqref{EqForRealCF} with $f=\log\Omega$:
\begin{align}\label{EqOmega}
o^A\partial_{AA'}\Omega=-\lambda\Omega\bar{o}_{A'} 
\end{align}
To solve this equation, we use identities \eqref{relationspindyads}, \eqref{nulltetradpp} and \eqref{NVpp}. Defining
\begin{align}\label{coordxy}
 x:=\zeta+\bar\zeta, \qquad y:=\i(\bar\zeta-\zeta),
\end{align}
and noticing that $\partial_{\zeta} = \partial_x-\i\partial_y$, $\partial_{\bar\zeta} = \partial_x+\i\partial_y$, it follows from \eqref{EqOmega} that
\begin{align}\label{dOmega}
 \partial_{u}\Omega=0, \qquad \partial_{x}\Omega=\lambda, \qquad \partial_{y}\Omega=0,
\end{align}
so $\Omega=\Omega(x,v)$. Integrating \eqref{dOmega}, we get:
\begin{proposition}\label{prop:CF}
With the definition \eqref{coordxy}, the Killing function satisfies $\lambda=\lambda(x,v)$, and the conformal factor $\Omega$ in \eqref{STmetric} is 
\begin{align}\label{Omega}
 \Omega(x,v) = \int\lambda(x,v)\d{x} + f(v),
\end{align}
where $f(v)$ is an arbitrary real function of $v$.
\end{proposition}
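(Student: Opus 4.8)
The proof splits into two parts: first establish the component equations \eqref{dOmega}, which carry all the content, and then integrate them, which is elementary. For the first part the plan is to read \eqref{EqOmega} as a first-order linear system for the real scalar $\Omega$ and resolve it into directional derivatives along the null tetrad adapted to the pp-wave conformal frame.

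Concretely, using the appendix identities \eqref{relationspindyads}, \eqref{nulltetradpp} and \eqref{NVpp} --- which tie the spin dyad $(o_A,\iota_A)$ and its null tetrad to the adapted coordinates $u,v,\zeta,\bar\zeta$ of Theorem \ref{result:lorentzian} --- I would contract \eqref{EqOmega} with the primed dyad $\{\bar o^{A'},\bar\iota^{A'}\}$. Contracting with $\bar o^{A'}$ annihilates the right-hand side, since $\bar o_{A'}\bar o^{A'}=0$, and the left-hand side becomes $\ell^a\partial_a\Omega$ with $\ell^a=o^A\bar o^{A'}$ a multiple of $\partial_u$; hence $\partial_u\Omega=0$. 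Contracting with $\bar\iota^{A'}$ and using $\bar o_{A'}\bar\iota^{A'}=1$ gives $m^a\partial_a\Omega=-\lambda\Omega$, and here the one thing to watch is that, by \eqref{relationspindyads}, the null vector $m^a=o^A\bar\iota^{A'}$ of $g_{ab}$ equals $\Omega$ times the corresponding coordinate vector field of the pp-wave chart; the factor $\Omega$ therefore cancels the $\Omega$ on the right, leaving an equation of the form $\partial_{\bar\zeta}\Omega=\lambda$. Conjugating \eqref{EqOmega}, contracting with $\iota^A$, and using that $\lambda$ is real yields the conjugate equation $\partial_\zeta\Omega=\lambda$. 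Passing to the real coordinates \eqref{coordxy} via $\partial_\zeta=\partial_x-\i\partial_y$, $\partial_{\bar\zeta}=\partial_x+\i\partial_y$ and separating real and imaginary parts reproduces \eqref{dOmega}: $\partial_u\Omega=\partial_y\Omega=0$, $\partial_x\Omega=\lambda$.

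Given \eqref{dOmega}, the Proposition is immediate. From $\partial_u\Omega=\partial_y\Omega=0$ one gets $\Omega=\Omega(x,v)$; then the left side of $\partial_x\Omega=\lambda$ depends only on $(x,v)$, forcing $\lambda=\lambda(x,v)$; and integrating $\partial_x\Omega=\lambda(x,v)$ in $x$ at fixed $v$ gives $\Omega(x,v)=\int\lambda(x,v)\d{x}+f(v)$, with $f$ real since $\Omega$ and $\lambda$ are. The arbitrary function $f(v)$ records the residual freedom in the conformal factor (equivalently, in the choice of parallel spinor) left after Theorem \ref{result:lorentzian}.

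The main obstacle is entirely in the first part: getting the conformal-weight bookkeeping right, so that one recognizes that the $\Omega$-dependent proportionality between the physical null tetrad of $g_{ab}$ and the coordinate vector fields of the pp-wave frame is exactly what cancels the $\Omega$ in \eqref{EqOmega}, turning it into the solvable $\partial_x\Omega=\lambda$ rather than an equation for $\log\Omega$ with a coefficient involving $\Omega$. Beyond that --- and the Lorentzian conventions relating primed and unprimed dyads --- the argument is routine.
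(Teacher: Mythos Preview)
Your argument is correct and follows the same route as the paper: the paper also invokes \eqref{relationspindyads}, \eqref{nulltetradpp} and \eqref{NVpp} to pass from \eqref{EqOmega} to the component equations \eqref{dOmega}, and then obtains the Proposition by direct integration. Your proposal simply makes the contractions with $\bar o^{A'}$ and $\bar\iota^{A'}$ (and the cancellation of the weight factor $\Omega$ via $m^a=\Omega\hat m^a$) explicit where the paper leaves them implicit.
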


\begin{remark}\label{remark:freedomCF}
From \eqref{coordpp}, we see that $\Omega$ is defined only up to multiplication by functions of $v,\zeta$: since $\hat{o}^{A}\partial_{AA'}v=0=\hat{o}^{A}\partial_{AA'}\zeta$, if $F=F(v,\zeta)$ then $F\Omega$ also satisfies \eqref{EqOmega}. This means that $(F\Omega) o_{A}$ is a parallel spinor for the metric $(F\Omega)^2g_{ab}$. Choosing $F$ to be real implies $F=F(v)$; then the metric $(F\Omega)^2g_{ab}$ will also be a real, Lorentzian pp-wave.
\end{remark}

We note that some simplifications are obtained in the special case 
\begin{align}\label{lambdav}
 \lambda=\lambda(v). 
\end{align}
In the expression \eqref{Omega} for $\Omega$, define $\tilde{x}$ by $\d{x}=\d\tilde{x}-\d{f}/\lambda$, then we can eliminate $f(v)$. Dropping the tilde, we have $\Omega(x,v)=\lambda(v)x$. But from Remark \ref{remark:freedomCF} we know that we have the freedom to multiply $\Omega$ by any function of $v$. Multiplying then by $b\lambda^{-1}$, where $b\neq0$ is an arbitrary constant, and denoting the new conformal factor again by $\Omega$, we have $\Omega = bx$, and the space-time metric \eqref{STmetric} is 
\begin{align}\label{siklos}
 \d{s}^2 = \frac{1}{b^2x^2}(2\d{u}\d{v}-\d{x}^2-\d{y}^2+\hat{H}\d{v}^2).
\end{align}
In \cite[Eq. (3)]{Siklos1985}, Siklos obtained this form of the metric for space-times admitting an {\em ordinary} Killing spinor, i.e. a solution to \eqref{realKSE} with $\lambda={\rm const}$. Here we see that \eqref{siklos} is valid for space-times with Killing spinors \eqref{KSE2} with Killing function $\lambda=\lambda(v)$.

\section{Einstein's equations}
\label{sec:Einstein}

Let $(M,g_{ab})$ be the Lorentzian space-time from Theorem \ref{result:lorentzian} (i.e., a generalized Siklos space-time). In this section we give a physical interpretation of $(M,g_{ab})$, by computing the Einstein's field equations and analyzing the corresponding energy-momentum tensor: 
\begin{align}\label{EFE}
 G_{ab} + cg_{ab} = - 8\pi G T_{ab}.
\end{align}
Our conventions follow \cite[Eq. (4.6.30)]{PR1}, so $c$ is the cosmological constant, $T_{ab}$ is the energy-momentum tensor of the sources. We took the Newton's constant to be $\frac{1}{8\pi}$.

\subsection{The Einstein tensor}

We use the null tetrad $(\ell_a,n_a,m_a,\bar{m}_a)$ given by \eqref{nulltetradpp}-\eqref{nulltetradK}. It is convenient to also define the space-like vector 
\begin{align}
 X_{a}:=m_a+\bar{m}_a.
\end{align}
In terms of these variables, we have:
\begin{proposition}
The Einstein tensor of $(M,g_{ab})$ is
\begin{equation}
\begin{aligned}
G_{ab} ={}& -\left[\Omega(\hat{H}_{xx}+\hat{H}_{yy})+2(\Omega_{vv}-\Omega_{x}\hat{H}_{x}) \right]\Omega^3\ell_a\ell_b \\
& -2\Omega\left[ \Omega_{xx}\, X_aX_b+2\,\Omega\,\Omega_{vx} \, \ell_{(a}X_{b)} \right]
+(6\Omega_{x}^2-4\Omega\Omega_{xx})g_{ab}. \label{Einsteintensor}
\end{aligned}
\end{equation}
\end{proposition}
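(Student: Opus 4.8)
The plan is to exploit the conformal structure established in Theorem~\ref{result:lorentzian}: the metric \eqref{STmetric} satisfies $g_{ab}=\Omega^{-2}\hat g_{ab}$, where $\hat g_{ab}$ is the Lorentzian pp-wave \eqref{ppwave} carrying a parallel spinor. For such a pp-wave the curvature is elementary --- one has $\hat\Lambda=0$ and $\hat\Phi_{ab}$ proportional to the square of the covariantly constant null covector, with proportionality function the flat transverse Laplacian $\hat H_{xx}+\hat H_{yy}$ (this, with the null tetrad \eqref{nulltetradpp} and the coordinate change of \eqref{coordxy}, is recorded in Appendix~\ref{appendix:ppwaves}). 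Since the Weyl spinor is conformally invariant by \eqref{CTWeyl} and trace-free, it does not enter $G_{ab}$; hence it suffices to produce $\Phi_{ab}$ and $\Lambda$ for $g_{ab}$ from $\hat\Phi_{ab}$ and $\hat\Lambda$ through the conformal formulas \eqref{CTRspinor}--\eqref{CTRscalar}, and then assemble, via \eqref{curvaturespinors}, $G_{ab}=R_{ab}-\tfrac12 R g_{ab}=-2\Phi_{ab}-6\Lambda g_{ab}$.

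The key simplification that forces the answer into the shape \eqref{Einsteintensor} is Prop.~\ref{prop:CF}: $\Omega=\Omega(x,v)$ with $\Omega_x=\lambda$. Thus $\Upsilon_a:=\nabla_a\log\Omega$ has only $\d{x}$- and $\d{v}$-components; in the null tetrad $\d{v}$ is proportional to $\ell_a$, while $\d{x}=\d\zeta+\d\bar\zeta$ is proportional to $X_a=m_a+\bar m_a$. Consequently $\Upsilon_a\Upsilon_b$ and the pp-wave Hessian $\hat\nabla_a\Upsilon_b$ decompose precisely into the tensor structures $\ell_a\ell_b$, $X_aX_b$, $\ell_{(a}X_{b)}$, and $g_{ab}$ appearing in \eqref{Einsteintensor}: the second derivative $\Omega_{xx}$ (with $\Omega_x^2$) feeds the $X_aX_b$ and the $g_{ab}$ pieces, the mixed derivative $\Omega_{vx}$ feeds $\ell_{(a}X_{b)}$, and $\Omega_{vv}$ together with those pp-wave connection coefficients that are built from $\hat H_x,\hat H_y$ feed the $\ell_a\ell_b$ piece, producing the characteristic cross-term $\Omega_x\hat H_x$.

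Concretely, I would: (i) write down the connection coefficients of $\hat g_{ab}$ in the coordinates of \eqref{coordxy} --- only those involving $\hat H$ and its first derivatives are nonzero --- and record $\hat\Phi_{ab},\hat\Lambda$; (ii) compute $\hat\nabla_a\Upsilon_b$ and the $\hat g$-trace $\hat g^{cd}(\hat\nabla_c\Upsilon_d+\Upsilon_c\Upsilon_d)$ using $\Omega=\Omega(x,v)$ and $\Omega_x=\lambda$; (iii) substitute into \eqref{CTRspinor}--\eqref{CTRscalar}, carefully tracking the powers of $\Omega$ induced by $\hat g_{ab}=\Omega^2 g_{ab}$ (this is where the overall $\Omega^3\ell_a\ell_b$ and $2\Omega(\cdots)$ prefactors of \eqref{Einsteintensor} arise); (iv) assemble $G_{ab}=-2\Phi_{ab}-6\Lambda g_{ab}$, re-express $\d{x},\d{v}$ through $X_a,\ell_a$, collect, and compare with \eqref{Einsteintensor}.

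The main obstacle is bookkeeping rather than conceptual. One must track the $\Omega$-weights correctly when passing between $\hat g$- and $g$-tensors and between coordinate and tetrad components; and one must remember that $\Phi_{ab}$, although trace-free with respect to $g_{ab}$, has a nonzero $g_{ab}$-component when expanded in the (non-orthogonal, non-trace-free) basis $\{\ell_a\ell_b,\,X_aX_b,\,\ell_{(a}X_{b)},\,g_{ab}\}$ --- indeed the term $\tfrac14\hat g_{ab}\hat g^{cd}(\cdots)$ in \eqref{CTRspinor}, once rewritten as $\tfrac14\Omega^2 g_{ab}\hat g^{cd}(\cdots)$, recombines with the trace parts of $\hat\nabla_a\Upsilon_b$ and $\Upsilon_a\Upsilon_b$ --- so the $g_{ab}$-coefficient $6\Omega_x^2-4\Omega\Omega_{xx}$ in \eqref{Einsteintensor} is the sum of $-6\Lambda$ and the trace-balancing contribution proportional to the $X_aX_b$-coefficient of $-2\Phi_{ab}$ (since $g^{ab}X_aX_b\neq0$ while $g^{ab}\ell_a\ell_b=g^{ab}\ell_{(a}X_{b)}=0$). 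An alternative, avoiding the pp-wave connection coefficients, is to start from Lemma~\ref{lemma:CKS}: \eqref{RestrictionsWeyl} removes the Weyl contribution, the restricted structure of $\Phi_{ab}$ follows from the conformal argument above, and \eqref{riccio} together with \eqref{SCotilde} fix its components and $\Lambda$ in terms of $\nabla_a\lambda$; substituting $\lambda=\Omega_x$ from \eqref{dOmega} and the explicit tetrad \eqref{nulltetradpp} then reproduces \eqref{Einsteintensor}.
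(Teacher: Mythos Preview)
Your proposal is correct and follows essentially the same route as the paper: compute $\Phi_{ab}$ and $\Lambda$ for $g_{ab}$ from the pp-wave curvature via the conformal formulas \eqref{CTRspinor}--\eqref{CTRscalar}, using $\Omega=\Omega(x,v)$ from Prop.~\ref{prop:CF} to reduce $\hat\nabla_a\Upsilon_b+\Upsilon_a\Upsilon_b$ to the four tensor structures $\ell_a\ell_b$, $X_aX_b$, $\ell_{(a}X_{b)}$, $g_{ab}$, and then assemble $G_{ab}=-2\Phi_{ab}-6\Lambda g_{ab}$. Your bookkeeping remarks (on the $\Omega$-weights and on how the $g_{ab}$-coefficient arises as $-6\Lambda$ plus the trace-balancing piece of $-2\Phi_{ab}$) match the paper's computation exactly.
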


\begin{proof}
Using \cite[Eq. (4.6.25)]{PR1}, the Einstein tensor is in general
\begin{align}
G_{ab}=-2\Phi_{ab}-6\Lambda g_{ab}. \label{Einsteinaux}
\end{align}
Expressions for $\Phi_{ab}$ and $\Lambda$ can be obtained from identities \eqref{CTRspinor}, \eqref{CTRscalar}, and \eqref{ppcurvature}. We have
\begin{align}
\Phi_{ab} ={}& \tfrac{\Omega^{4}}{2}\hat{H}_{\zeta\bar\zeta}\ell_a\ell_b + \hat\nabla_{a}\Upsilon_{b} + \Upsilon_{a}\Upsilon_{b} - \tfrac{1}{4}\hat{g}_{ab}\hat{g}^{cd}(\hat\nabla_{c}\Upsilon_{d}+\Upsilon_{c}\Upsilon_{d}), \label{Ricciaux} \\
\Lambda ={}& \tfrac{1}{4}\Omega^{3}\hat{g}^{ab}\hat\nabla_{a}\hat\nabla_{b}\Omega^{-1}, \label{SCaux}
\end{align}
where we recall that $\Upsilon_{a}=\partial_{a}\log\Omega$. Using \eqref{NVpp}, \eqref{CCpp} and \eqref{Omega}, we find
\begin{align*}
\hat\nabla_{a}\Upsilon_{b} + \Upsilon_{a}\Upsilon_{b} = 
\Omega^{3}(\Omega_{vv}-\Omega_{x}\hat{H}_{x})\ell_a\ell_b + \Omega\,\Omega_{xx}\, X_a X_b + 2\Omega^2\Omega_{vx}\,\ell_{(a}X_{b)}.
\end{align*}
Contracting with $\hat{g}^{ab}=\Omega^{-2}g^{ab}$ and using $g^{ab}X_aX_b=-2$, we get
\begin{align*}
\hat{g}^{ab}(\hat\nabla_{a}\Upsilon_{b} + \Upsilon_{a}\Upsilon_{b}) = -2\Omega^{-1}\Omega_{xx}.
\end{align*}
Replacing in \eqref{Ricciaux}:
\begin{align*}
\Phi_{ab} = \left[\tfrac{\Omega}{2}\hat{H}_{\zeta\bar\zeta} + \Omega_{vv}-\Omega_{x}\hat{H}_{x}\right]\Omega^3 \ell_a\ell_b + \Omega\,\Omega_{xx}\, X_a X_b + 2\Omega^2\Omega_{vx}\,\ell_{(a}X_{b)} + \frac{\Omega}{2}\Omega_{xx} g_{ab}.
\end{align*}
For the scalar curvature $\Lambda$, we use \eqref{SCaux} and \eqref{waveoperatorpp}:
\begin{align}
\Lambda = -\tfrac{1}{2}\Omega^{3}\partial_{x}^2\Omega^{-1} 
= \tfrac{1}{2}\Omega\,\Omega_{xx} - \Omega_{x}^2.
\end{align}
Replacing the above expressions for $\Phi_{ab}$ and $\Lambda$ in \eqref{Einsteinaux}, we get \eqref{Einsteintensor}.
\end{proof}

\subsection{The special case $\lambda=\lambda(v)$}
\label{sec:speciallambda}
Recall from the discussion below eq. \eqref{lambdav} that in the case $\lambda=\lambda(v)$, the conformal factor can be chosen to be $\Omega=bx$ where $b$ is an arbitrary constant. Replacing in \eqref{Einsteintensor}, it immediately follows that
\begin{align}\label{Einsteinspeciallambda}
 G_{ab} = -\left[\hat{H}_{xx} + \hat{H}_{yy} - \frac{2}{x}\hat{H}_{x} \right]b^4x^4\ell_a\ell_b + 6b^2g_{ab}.
\end{align}
Comparing to \eqref{EFE}, we see that there is a cosmological term (with cosmological constant $c=-6b^2$) and an energy-momentum tensor corresponding to pure radiation (null dust). This is analogous to the interpretation of an ordinary Siklos space-time \cite{Siklos1985}, as expected since we showed that the metric is \eqref{siklos}, even though $\lambda$ is not constant but $\lambda=\lambda(v)$.

\subsection{The generic case $\lambda=\lambda(x,v)$}
Since $\lambda=\Omega_{x}$ (recall \eqref{dOmega}), the assumption $\lambda_{x}\neq0$ implies $\Omega_{xx} \neq 0$. We can then rearrange eq. \eqref{Einsteintensor} as $G_{ab}=-(T^{(\ell)}_{ab}+T^{(s)}_{ab})$, where
\begin{align}
 T^{(\ell)}_{ab} = \rho_{(\ell)}\ell_{a}\ell_{b}, \qquad
 T^{(s)}_{ab} = (\rho_{(s)}+p_{(s)})s_{a}s_{b} - p_{(s)}g_{ab},
\end{align}
and
\begin{align}
\rho_{(\ell)} ={}& 2\Omega^{3}\left[\tfrac{\Omega}{2}(\hat{H}_{xx}+\hat{H}_{yy}) + \Omega_{vv} - \Omega_{x}\hat{H}_{x}-\frac{\Omega_{vx}^2}{\Omega_{xx}} \right], \\
p_{(s)} ={}& 6\Omega_{x}^2-4\Omega\Omega_{xx}, \\
\label{eq:eosgen}
\rho_{(s)} ={}& -p_{(s)} + 2\frac{\Omega}{\Omega_{xx}}, \\
s_{a} ={}& \Omega\Omega_{vx}\, \ell_{a} + \Omega_{xx}\, X_{a}.
\end{align}
Comparing to \eqref{EFE}, we see that the energy-momentum tensor $T_{ab}$ is the superposition of a pure radiation term $T^{(\ell)}_{ab}$ and a perfect fluid term $T^{(s)}_{ab}$. The equation of state \eqref{eq:eosgen} associated with this last piece generalizes that of a cosmological constant ($p=-\rho$) with the addition of the term $2\frac{\Omega}{\Omega_{xx}}$. However, the vector field $s_{a}$ satisfies $s_as^a=-2\Omega_{xx}^2<0$, so it is {\em space-like}. Since $s_{a}$ represents the 4-velocity of the fluid, this means that it propagates faster than light. We then conclude that the case $\lambda_{x}\neq0$ corresponds to an unphysical situation.

As an example, take $\lambda=x$, which via \eqref{Omega} gives $\Omega=\frac{1}{2}x^2$ (we set $f(v)=0$). Then the variables of the space-like fluid are $p_{(s)}=4x^2$, $\rho_{(s)}=-3x^2$, $s_{a}=X_{a}$.

\subsection{The Einstein case}
From Remark \ref{remark:complex}, the Einstein condition $\Phi_{ab}=0$ (i.e. $R_{ab}=c g_{ab}$) can only be satisfied if $\lambda$ is constant.\footnote{An analogous result was obtained in three Lorentzian dimensions, cf. \cite[Corollary 4.9]{Shahbazi24}.}. This is a special case of the situation in Section \ref{sec:speciallambda}: the Einstein tensor for $\lambda={\rm const.}$ is given by \eqref{Einsteinspeciallambda}. So the Einstein condition simply reduces to the vanishing of the term inside the square brackets in \eqref{Einsteinspeciallambda}:
\begin{equation}
\hat{H}_{xx}+\hat{H}_{yy} - \frac{2}{x}\hat{H}_{x}=0.
\label{EinsteinEqSK}
\end{equation}
This result was also obtained by Siklos \cite{Siklos1985}, who found the general solution to \eqref{EinsteinEqSK}:
\begin{equation}
\hat{H}=x^2 \partial_{x} \left ( x^{-1}(f(\zeta,v)+\bar{f}(\bar{\zeta},v))\right) ,
\label{SolSiklos}
\end{equation}
where $f(\zeta,v)$ is an arbitrary function (cf. \cite[Eq. (6)]{Siklos1985}).

\section{Concluding remarks}

We proved that a four-dimensional Lorentzian space-time admitting a solution to the supersymmetry-motivated equation \eqref{KSE0} must be given by eq. \eqref{STmetric}, and we showed that it can be interpreted as describing gravitational waves where the matter sources are generically given by null dust plus a space-like perfect fluid, generalizing the standard Siklos space-times \cite{Podolsky1998}. We showed that the Killing function $\lambda$ depends only on two real coordinates, one that labels the wave surfaces and one that is transverse to them. We concluded that the physically interesting case corresponds to the situation in which $\lambda$ is homogenous on the surfaces, since then the space-like fluid is absent and the sources are pure radiation. We also saw that the Einstein case reduces to the ordinary Killing spinor equation, where the Einstein equation linearizes \eqref{EinsteinEqSK} and its general solution can be found \eqref{SolSiklos}.

Regarding future directions, it would be interesting to study which generalizations of the Killing-spinor condition lead to Lorentzian four-manifolds with physically sensible matter content. Another direction to examine is the study of Killing spinors with a generic Killing function in higher space-time dimensions. Also, it would be intriguing to explore the holographic interpretation of (generalized) Siklos space-times.

\subsection*{Acknowledgements}
The authors would like to thank C. S. Shahbazi for his very useful comments. BA acknowledges support of the Institut Henri Poincaré (UAR 839 CNRS-Sorbonne Université), and LabEx CARMIN (ANR-10-LABX-59-01). Work of \'AJM has been supported by a postdoctoral grant from the Istituto Nazionale di Fisica Nucleare, Bando 23590. \'AJM acknowledges the additional financial support from the International Network on Quantum Fields and Strings to visit the Max Planck Institut (MPI) f\"ur Gravitationsphysik (Albert Einstein Institute), where this project was initiated under the very warm hospitality of the MPI. \'AJM would like to dedicate this manuscript in loving memory of Carmencita Pérez Quesada.

\appendix

\section{Pp-waves}
\label{appendix:ppwaves}

A four-dimensional Lorentzian space-time $(M,\hat{g}_{ab})$ with a parallel 2-spinor $\hat{o}_{A}$ is a pp-wave. The structure of its metric and curvature can be seen e.g. from \cite[Prop. 2.1]{A22}: there are two real coordinates $u,v$, and a complex coordinate $\zeta$, defined by
\begin{align}\label{coordpp}
 \partial_{a}v=\hat{o}_{A}\bar{\hat{o}}_{A'}, \qquad 
 \partial_{a}\zeta=\hat{o}_{A}\bar{\hat{\iota}}_{A'}, \qquad 
 (\partial_{u})^{a} = \hat{o}^{A}\bar{\hat{o}}^{A'}, \qquad 
 (\partial_{\bar\zeta})^{a} = - \hat{o}^{A}\bar{\hat{\iota}}^{A'},
\end{align}
where $\hat\iota_{A}$ is a spinor field with $\hat{o}_{A}\hat{\iota}^{A}=1$, such that the line element of $\hat{g}_{ab}$ is
\begin{align}\label{ppwave}
\widehat{\d{s}}{}^2 = 2(\d{u}\d{v}-\d\zeta\d\bar\zeta) + \hat{H}\d{v}^2,
\end{align}
for some real function $\hat{H}$. The spin dyad $\{\hat{o}_{A},\hat\iota_{A}\}$ and its complex conjugate give a natural null tetrad via \eqref{nulltetrad}:
\begin{align}
 \hat\ell^a=\hat{o}^{A}\bar{\hat{o}}^{A'}, \qquad 
 \hat{n}^a=\hat{\iota}^{A}\bar{\hat{\iota}}^{A'}, \qquad 
 \hat{m}^a=\hat{o}^{A}\bar{\hat{\iota}}^{A'}, \qquad
 \bar{\hat{m}}^a=\hat{\iota}^{A}\bar{\hat{o}}^{A'},
 \label{nulltetradpp}
\end{align}
satisfying $\hat\ell_a\hat{n}^a=1=-\hat{m}_a\hat{\bar{m}}^a$. Expressions for these vectors in terms of the coordinates $(u,v,\zeta,\bar\zeta)$, and for the connection coefficients, 
can be found in \cite[eq. (A.1)]{A22}:
\begin{align}
& \hat\ell^{a}\partial_{a} = \partial_{u}, \qquad \hat{m}^{a}\partial_{a} = -\partial_{\bar\zeta}, \qquad \bar{\hat{m}}^{a}\partial_{a} = -\partial_{\zeta}, \qquad 
\hat{n}^{a}\partial_{a} = \partial_{v} - \tfrac{1}{2}\hat{H}\partial_{u}, \label{NVpp} \\
& \hat\nabla_{a}\hat\ell^b = 0, \quad 
\hat\nabla_{a}\hat{m}^{b} = -\bar{\hat{\kappa}}'\hat\ell_{a}\hat\ell^b, \quad
\hat\nabla_{a}\bar{\hat{m}}^{b} = -\hat{\kappa}'\hat\ell_{a}\hat\ell^b, \quad 
\hat\nabla_{a}\hat{n}^{b} = -\hat\ell_{a}(\hat{\kappa}'\hat{m}^{b}+\bar{\hat{\kappa}}'\bar{\hat{m}}^{b}), \label{CCpp}
\end{align}
where $\hat\kappa'=\tfrac{1}{2} \hat{H}_{\zeta}$. 
The curvature is 
\begin{align}
\hat\Phi_{ab} = \tfrac{1}{2}\hat{H}_{\zeta\bar\zeta}\hat\ell_a\hat\ell_b, 
 \qquad \hat\Lambda = 0, 
 \qquad \hat\Psi_{ABCD} = \tfrac{1}{2}\hat{H}_{\bar\zeta\bar\zeta}\hat{o}_{A}\hat{o}_{B}\hat{o}_{C}\hat{o}_{D}, \label{ppcurvature}
\end{align}
The wave operator acting on an arbitrary scalar field $\varphi$ is \cite[Eq. (2.10)]{A22}:
\begin{align}\label{waveoperatorpp}
 \hat{g}^{ab}\hat\nabla_{a}\hat\nabla_{b}\varphi=2(\varphi_{uv}-\varphi_{\zeta\bar\zeta}) - \hat{H}\varphi_{uu}.
\end{align}

\medskip
The space-time $(M,g_{ab})$ from Sections \ref{sec:KSL} and \ref{sec:Einstein} is conformally related to the pp-wave, via $g_{ab}=\Omega^{-2}\hat{g}_{ab}$. The relation between the spin dyads of $g_{ab}$ and $\hat{g}_{ab}$ is
\begin{align}\label{relationspindyads}
 \hat{o}_{A} = \Omega o_{A}, \qquad \hat\iota_{A}=\iota_A, \qquad \hat{o}^A=o^A, \qquad \hat\iota^A=\Omega^{-1}\iota^A,
\end{align}
and analogously for $\{\bar{\hat{o}}_{A'},\bar{\hat\iota}_{A'}\}$. The null tetrads are related by 
\begin{equation}\label{nulltetradK}
\begin{aligned}
  \ell^a ={}& \hat\ell^a, \qquad & n^a={}&\Omega^{2}\hat{n}^a, \qquad & m^a={}& \Omega\hat{m}^a, \qquad & \bar{m}^a ={}& \Omega\bar{\hat{m}}^{a}, \\
 \ell_a ={}& \Omega^{-2}\hat\ell_a, \qquad & n_a={}&\hat{n}_a, \qquad & m_a ={}& \Omega^{-1}\hat{m}_a, \qquad & \bar{m}_a ={}& \Omega^{-1}\bar{\hat{m}}_{a},
\end{aligned}
\end{equation}

\section{The local isometry type of Siklos space-times}

It was shown in the main text that any (generalized) Siklos space-time is locally isometric to:
\begin{align}
 \d{s}^2 = 2\Omega^{-2}(\d{u}\d{v}-\d\zeta\d\bar\zeta) + H\d{v}^2, 
 \qquad H:=\Omega^{-2}\hat{H}, \label{metricKSLapp}
\end{align}
so it depends on two real functions: $H$ and $\Omega$. In particular, standard Siklos space-times are such that $\Omega=\lambda x$, where $\lambda \in \mathbb{R}$ is the Killing constant. On the other hand, in the literature, it was proven  (cf. \cite[Corollary 2.21]{Murcia22}) that every Siklos space-time (these space-times were called supersymmetric Kundt in \cite{Murcia22}) was locally isometric to:
\begin{equation}
\d\tilde{s}^2=\tilde{H} \d \tilde{v}^2+e^{\tilde{\mathcal{F}}}\left ( \d \tilde{u}+\tilde{\beta}-\frac{e^{-\tilde{\mathcal{F}}}}{4\tilde{\lambda}^2} \partial_{\tilde{v}}  \tilde{\mathcal{F}} \d \tilde{\mathcal{F}}-\partial_{\tilde{v}} \tilde{\mathcal{G}}\d \tilde{\mathcal{G}}\right) \odot \d \tilde{v}+\frac{e^{\tilde{\mathcal{F}}}}{4\tilde{\lambda}^2} \left ( e^{-\tilde{\mathcal{F}}} \d \tilde{\mathcal{F}}^2+4 \tilde{\lambda}^2 \d \tilde{\mathcal{G}}^2\right),
\label{eq:metms}
\end{equation}
where $\tilde{H}, \tilde{\mathcal{F}}$  and $\tilde{\mathcal{G}}$ are functions independent of the $\tilde{u}$-coordinate, $\tilde{\lambda}$ is a constant and $\tilde{\beta}$ is a one-form transverse to the $(\tilde{u},\tilde{v})$-coordinates whose components do not depend on the $\tilde{v}$-coordinate. The one-form  $\tilde{\beta}$ must satisfy a certain condition (cf. \cite[eq. (2.20)]{Murcia22}), which we opt to present later in the exposition for clarity. Also, observe that we are using a tilde to affect those objects that refer to the local characterization of Siklos space-times given in \cite{Murcia22}. 

At this point, it is necessary to make compatible the results of the present manuscript with those of \cite{Murcia22}. To this aim, note that \eqref{metricKSLapp} and \eqref{eq:metms} will be locally isometric if the following condition holds:
\begin{equation}
\d \tilde{\alpha}= \d \tilde{\Gamma} \wedge \d \tilde{v}\,, \quad \tilde{\alpha}=\tilde{\beta}-\frac{e^{-\tilde{\mathcal{F}}}}{4\tilde{\lambda}^2} \partial_{\tilde{v}} \tilde{\mathcal{F}} \d \tilde{\mathcal{F}}-\partial_{\tilde{v}} \tilde{\mathcal{G}}\d \tilde{\mathcal{G}},
\label{eq:comparar}
\end{equation}
where $\tilde{\Gamma}$ is a certain function independent of $\tilde{u}$. Indeed, in such a case, define $\tilde{H}=H-2e^{\tilde{\mathcal{F}}} \tilde{\Gamma}$. Then, after introducing a new coordinate $\d u=\d \tilde{u}+\tilde{\alpha}-\tilde{\Gamma}\d \tilde{v}$, we arrive to \eqref{metricKSLapp} by setting $e^{\tilde{\mathcal{F}}}=\Omega^{-2}$. Consequently, all our efforts should be focused on proving \eqref{eq:comparar}. For that, let us split:
\begin{equation}
\tilde{\alpha}=\tilde{\kappa}-\frac{e^{-\tilde{\mathcal{F}}}}{4 \tilde{\lambda}^2} \left(\partial_{\tilde{v}} \tilde{\mathcal{F}} \right)^2 \d \tilde{v}-\left (\partial_{\tilde{v}} \tilde{\mathcal{G}} \right)^2 \d \tilde{v} \,, \quad \tilde{\kappa}=\tilde{\beta}-\frac{e^{-\tilde{\mathcal{F}}}}{4\tilde{\lambda}^2} \partial_{\tilde{v}} \tilde{\mathcal{F}} \d_X \tilde{\mathcal{F}}-\partial_{\tilde{v}} \tilde{\mathcal{G}}\d_X \tilde{\mathcal{G}},
\end{equation}
where $\d_X$ denotes the exterior derivative along the space transverse to the coordinates $(\tilde{u},\tilde{v})$. Since the difference between $\tilde{\alpha}$ and $\tilde{\kappa}$ lies along the $\tilde{v}$-coordinate, to guarantee that \eqref{eq:comparar} holds it suffices to observe that $\d \tilde{\kappa}= \d_X \tilde{\Gamma} \wedge \d \tilde{v}$. We write:
\begin{equation}
\d \tilde{\kappa}=-\partial_{\tilde{v}}  \tilde{\kappa} \wedge \d \tilde{v}+\d_X \tilde{\kappa}\,, \quad  \d_X \tilde{\kappa}=\d_X \tilde{\beta}-\frac{e^{-\tilde{\mathcal{F}}}}{4 \tilde{\lambda}^2} \partial_{\tilde{v}} \d_X \tilde{\mathcal{F}} \wedge \d_X \mathcal{F} -\partial_{\tilde{v}} \d_X \tilde{\mathcal{G}} \wedge \d_X \tilde{\mathcal{G}},
\end{equation}
where $\partial_{\tilde{v}}  \tilde{\kappa}$ stands for the one-form transverse to the $(\tilde{u},\tilde{v})$-space whose components are the derivatives with respect to $\tilde{v}$ of those of $\tilde{\kappa}$. Let us now focus momentarily on $\d_X \tilde{\kappa}$. On the one hand, an appropriate and careful massaging of eq. (2.20) of \cite{Murcia22} reveals that:
\begin{equation}
\d_X \tilde{\beta}=\frac{1}{4 \tilde{\lambda}^2} \left ( -\left\langle \partial_{\tilde{v}} \d_X \tilde{\mathcal{F}}, \d_X \tilde{\mathcal{G}} \right\rangle_{\tilde{q}} +\left\langle \partial_{\tilde{v}} \d_X \tilde{\mathcal{G}}, \d_X \tilde{\mathcal{F}} \right\rangle_{\tilde{q}} \right) \d_X \tilde{\mathcal{F}} \wedge \d_X \tilde{\mathcal{G}},
\end{equation}
where $\langle \cdot, \cdot \rangle_{\tilde{q}}$ denotes the pairing of one-forms along the space transverse to the coordinates $(\tilde{u},\tilde{v})$ with the metric $\tilde{q}=\frac{1}{4 \tilde{\lambda^2}} \d_X \tilde{\mathcal{F}} \otimes \d_X \tilde{\mathcal{F}}+ e^{\tilde{\mathcal{F}}}\d_X \tilde{\mathcal{G}}\otimes \d_X \tilde{\mathcal{G}}$. On the other hand:
\begin{align}
-\frac{e^{-\tilde{\mathcal{F}}}}{4 \tilde{\lambda}^2} \partial_{\tilde{v}} \d_X \tilde{\mathcal{F}} \wedge \d_X \tilde{\mathcal{F}}&=\frac{1}{4 \tilde{\lambda}^2} \left\langle \partial_{\tilde{v}} \d_X \tilde{\mathcal{F}}, \d_X \tilde{\mathcal{G}} \right\rangle_{\tilde{q}} \d_X \tilde{\mathcal{F}} \wedge \d_X \tilde{\mathcal{G}},\\
-\partial_{\tilde{v}} \d \tilde{\mathcal{G}} \wedge \d_X \tilde{\mathcal{G}}&=-\frac{1}{4\tilde{\lambda}^2} \left\langle \partial_{\tilde{v}} \d_X \tilde{\mathcal{G}}, \d_X \tilde{\mathcal{F}} \right\rangle_{\tilde{q}}\d_X \tilde{\mathcal{F}} \wedge \d_X \tilde{\mathcal{G}}.
\end{align}
Taking into account that the operations $\d_X$ and $\partial_{\tilde{v}}$ commute, by summing up terms we conclude that $\d_X \tilde{\kappa}=0$. Now, by the same token, we have that $\partial_{\tilde{v}} \d_X \tilde{\kappa}= \d_X \partial_{\tilde{v}}\tilde{\kappa}=0$. By writing locally $\partial_{\tilde{v}}\tilde{\kappa}= \d_X \tilde{\Gamma}$, we observe that \eqref{eq:metms} and \eqref{metricKSLapp} are locally isometric. The main conclusion is that, as anticipated before, Siklos space-times are not only conformally Brinkmann (i.e., conformal to space-times with a parallel null vector field), but conformal to pp-wave space-times (i.e., conformal to space-times endowed with a parallel spinor).


\begin{thebibliography}{99}

\bibitem{A22}
B.~Araneda,
{\em Parallel spinors, pp-waves, and gravitational perturbations},
Class. Quant. Grav. \textbf{40} (2023) no.2, 025006
[arXiv:2204.13673 [gr-qc]].

\bibitem{Byrant2000}
R.~L.~Bryant, 
{\em Pseudo-Riemannian metrics with parallel spinor fields and vanishing Ricci tensor}, 
S\'eminaires et Congr\'es \textbf{4} (2000) 53-94.

\bibitem{BarGauduchonMoroianu}
C. B\"ar, P. Gauduchon and A. Moroianu, 
{\em Generalized Cylinders in Semi-Riemannian and Spin Geometry},
Math. Z. \textbf{249} (2005), 545–580.

\bibitem{Freedman}
D.~Z.~Freedman and A.~Van Proeyen,
{\em Supergravity},
Cambridge Univ. Press, 2012,
ISBN 978-1-139-36806-3, 978-0-521-19401-3

\bibitem{Friedrich}
T.~Friedrich,
{\em Dirac Operators in Riemannian Geometry}, 
American Mathematical Society, ISBN 978-0-8218-2055-1 (2000)

\bibitem{FK1}
T. Friedrich and E.C. Kim, 
{\em The Einstein-Dirac equation on Riemannian spin manifolds,}
 J. Geom. Phys. \textbf{33} (2000), no. 1-2, 128–172.

\bibitem{FK2}
T. Friedrich and E.C. Kim, 
{\em Some remarks on the Hijazi inequality and generalizations of the Killing equation
for spinors} 
J. Geom. Phys. \textbf{37} (2001), no. 1-2, 1–14.

\bibitem{Gibbons1986}
G.~W.~Gibbons and P.~J.~Ruback,
{\em Classical Gravitons and Their Stability in Higher Dimensions},
Phys. Lett. B \textbf{171} (1986), 390-395

\bibitem{Grif-Pod}
J. B. Griffiths and J. Podolsk\'y,
{\em Exact Space-Times in Einstein's General Relativity},
Cambridge, UK: Univ. Pr. (2009) 525p

\bibitem{HuggettTod} 
  S.~A.~Huggett and K.~P.~Tod,
  {\em An Introduction To Twistor Theory},
  Cambridge, Uk: Univ. Pr. (1985) 145 P. (London Mathematical Society Student Texts, 4)

\bibitem{Ivanov}
S. Ivanov,
{\em Heterotic supersymmetry, anomaly cancellation and equations of motion},
Phys. Lett. B \textbf{685} (2010), 190–196

\bibitem{Murcia20}
\'A.~Murcia and C.~S.~Shahbazi,
{\em Parallel spinor flows on three-dimensional Cauchy hypersurfaces},
J. Phys. A. \textbf{56} (2023) no.20, 205024
[arXiv:2109.13906 [math.DG]].

\bibitem{Murcia22}
\'A.~Murcia and C.~S.~Shahbazi,
{\em Supersymmetric Kundt four manifolds and their spinorial evolution flows},
Lett. Math. Phys. \textbf{113} (2023) no.5, 106
[arXiv:2209.04396 [math.DG]].

\bibitem{PR1}
 R.~Penrose and W.~Rindler,
  {\em Spinors And Space-time. 1. Two Spinor Calculus And Relativistic Fields},
  Cambridge, Uk: Univ. Pr. (1984) 458 P. (Cambridge Monographs On Mathematical Physics)

\bibitem{PR2}
  R.~Penrose and W.~Rindler,
  {\em Spinors And Space-time. Vol. 2: Spinor And Twistor Methods In Space-time Geometry},
  Cambridge, Uk: Univ. Pr. (1986) 501p

\bibitem{Podolsky1998}
J. Podolsk\'y,  
{\em Interpretation of the Siklos solutions as exact gravitational waves in the anti-de Sitter universe},
Class. Quantum Grav. \textit{15} (1998), 719–733
[arXiv:9801052 [gr-qc]].

\bibitem{Rademacher}
H. B. Rademacher,
{\em Generalized killing spinors with imaginary killing function and conformal killing fields},
In: Ferus, D., Pinkall, U., Simon, U., Wegner, B. (eds) Global Differential Geometry and Global Analysis. Lecture Notes in Mathematics,  \textbf{1481} (1991). Springer, Berlin, Heidelberg.

\bibitem{Shahbazi24}
C.~S.~Shahbazi,
{\em Differential spinors and Kundt three-manifolds with skew-torsion},
[arXiv:2405.03756 [math.DG]].

\bibitem{Siklos1985}
S. T. C. Siklos,
{\em Lobatchevski plane gravitational waves},
in Galaxies, axisymmetric systems and relativity, ed. M. A. H. MacCallum, (Cambridge University Press), 247–274.

\bibitem{Tod1983}
K.~P.~Tod,
{\em All Metrics Admitting Supercovariantly Constant Spinors},
Phys. Lett. B \textbf{121} (1983), 241-244

\bibitem{Tod1995}
K.~P.~Tod,
{\em More on supercovariantly constant spinors},
Class. Quant. Grav. \textbf{12} (1995), 1801-1820

\bibitem{Wald}
R.~M.~Wald, 
{\em General Relativity}, 
Chicago, Usa: Univ. Pr. ( 1984) 491p

\bibitem{Woodhouse}
N.~M.~J.~Woodhouse,
{\em Real methods in twistor theory},
Class. Quant. Grav. \textbf{2} (1985), 257-291


\end{thebibliography}
\end{document}